\documentclass[10pt, two column, twoside]{IEEEtran}
\usepackage{amssymb}
\usepackage{amsmath} 
\usepackage[lined,boxed,commentsnumbered, ruled]{algorithm2e}
\usepackage{dsfont}
\usepackage{makecell}
\usepackage{mathrsfs}
\usepackage{bm}
\usepackage{tikz}
\usepackage{caption}
\usepackage{graphicx,booktabs,multirow}
\usepackage{epstopdf}
\usepackage{subfigure}
\usepackage{diagbox}
\usepackage{multirow}
\usepackage{tabularx} 
\usepackage{booktabs}
\usepackage{float}
\usepackage{verbatim}
\usepackage{color}
\usepackage[colorlinks, linkcolor=black, anchorcolor=black, citecolor=black]{hyperref}
\usepackage{cite}
\usepackage{setspace}

\definecolor{colorhkust}{RGB}{20,43,140}
\definecolor{colortsinghua}{RGB}{116,52,129}
\definecolor{color1}{RGB}{128,0,0}
\usepackage{color}
\usepackage{amsthm}

\usepackage[T1]{fontenc}
\usepackage{amsmath}
\interdisplaylinepenalty=2500
\usepackage[cmintegrals]{newtxmath}
\hyphenation{op-tical net-works semi-conduc-tor}
\usetikzlibrary{arrows}
\captionsetup{belowskip=-12pt}

\theoremstyle{definition} 
\newtheorem{lemma}{Lemma}
\newtheorem{theorem}{Theorem}

\newtheorem{proposition}{Proposition}
\newtheorem{definition}{Definition}

\newtheorem{assumption}{Assumption}

\begin{document}

\title{One-Bit Byzantine-Tolerant Distributed Learning via Over-the-Air Computation}
\author{Yuhan~Yang, \textit{Student Member}, \textit{IEEE}, Youlong Wu, \textit{Member}, \textit{IEEE}, Yuning Jiang, \textit{Member}, \textit{IEEE}, \\ and Yuanming~Shi, \textit{Senior Member}, \textit{IEEE}
	
\thanks{The work of Yuanming Shi was supported in part by the National Nature Science Foundation of China under Grant 62271318, the Natural Science Foundation of Shanghai under Grant No. 21ZR1442700, and the Shanghai Rising-Star Program under Grant No. 22QA1406100. The work of Yuning Jiang was supported by the Swiss National Science Foundation under the NCCR Automation (grant agreement 51NF40\_180545). \emph{(Corresponding author: Youlong Wu and Yuanming Shi.)}}
\thanks{Y. Yang, Y. Wu, and Y. Shi are with the School of Information Science and Technology, ShanghaiTech University, Shanghai 201210, China (e-mail: \{yangyh1, wuyl1, shiym\}@shanghaitech.edu.cn).}
\thanks{Y. Jiang is with the Automatic Control Laboratory, EPFL, Switzerland (yuning.jiang@epfl.ch).}
\thanks{This work has been submitted to the IEEE for possible publication. Copyright may be transferred without notice, after which this version may no longer be accessible.}
}


\maketitle

\begin{abstract}
Distributed learning has become a promising computational parallelism paradigm that enables a wide scope of intelligent applications from the Internet of Things (IoT) to autonomous driving and the healthcare industry. This paper studies distributed learning in wireless data center networks, which contain a central edge server and multiple edge workers to collaboratively train a shared global model and benefit from parallel computing. However, the distributed nature causes the vulnerability of the learning process to faults and adversarial attacks from Byzantine edge workers, as well as the severe communication and computation overhead induced by the periodical information exchange process. To achieve fast and reliable model aggregation in the presence of Byzantine attacks, we develop a signed stochastic gradient descent (SignSGD)-based Hierarchical Vote framework via over-the-air computation (AirComp), where one voting process is performed locally at the wireless edge by taking advantage of Bernoulli coding while the other is operated over-the-air at the central edge server by utilizing the waveform superposition property of the multiple-access channels. We comprehensively analyze the proposed framework on the impacts including Byzantine attacks and the wireless environment (channel fading and receiver noise), followed by characterizing the convergence behavior under non-convex settings. Simulation results validate our theoretical achievements and demonstrate the robustness of our proposed framework in the presence of Byzantine attacks and receiver noise.

\end{abstract}

\begin{IEEEkeywords}
Distributed learning, Byzantine tolerance, over-the-air computation (AirComp), hierarchical vote, wireless data center networks.
\end{IEEEkeywords}

\IEEEpeerreviewmaketitle

\section{Introduction} \label{SecI}
Fueled by the unprecedented success of artificial intelligence (AI), the explosion of abundant data generated from myriad edge devices with advanced sensing, communication, and computation technologies (e.g., smart phones, robots, and vehicles) can be utilized to boost a large number of emerging intelligence services and applications \cite{Distributed-learning-intro1}. Nevertheless, although the massive amount of data is capable of enhancing the performance of machine learning, due to the limited computational resources, it is always hard for a single entity to handle such high-volume, open-ended datasets effectively and efficiently. As a remedy, distributed learning \cite{Distributed-learning-intro2} has become a promising alternative which orchestrates multiple edge workers to collaboratively train a shared global model and enjoys computational parallelism. Celebrated distributed learning paradigms such as federated learning \cite{FLChallenge-future,FLGoogle,Distributed-learning-intro1,Distributed-learning-intro2}, swarm learning \cite{Swarm-learning}, and split learning \cite{Split-learning-intro}, have realized a wide scope of applications including 6G networks \cite{Distributed-learning-intro1,Distributed-learning-intro2,Distributed-learning-intro3,Distributed-learning-intro5}, Internet of Things (IoT) \cite{Iot-intro}, autonomous driving \cite{Distributed-learning-intro4}, and healthcare industry \cite{Healthcare-intro}. In this paper, we consider a distributed learning paradigm named data center networks \cite{Intro-data-center1,Intro-data-center2,Intro-data-center3} where a central edge server with access to the entire global dataset partitions the training samples into several non-overlapping sub-datasets and distributes the datasets to multiple edge workers for parallel computing, followed by orchestrating the edge workers to collaboratively train a shared global model by exchanging local update results (e.g., model parameters or gradients). However, the distributed nature also raises concerns about trustworthiness, especially for high-stake applications (e.g., autonomous driving), which require high guarantees in terms of privacy, security, and fairness during the learning process \cite{Distributed-learning-intro3,Distributed-learning-intro5,Trustworthy-AI-intro,Distributed-learning-intro6}. Besides, due to the periodical model update exchange process, the induced communication and computation overhead is another non-negligible factor that limits the performance of distribution learning \cite{Distributed-learning-intro3}.

\par
Although the distributed nature benefits the data center networks from computational parallelism, it causes severe security concerns due to its vulnerability to faults and adversarial attacks, such as Byzantine attacks \cite{FLChallenge-future}. In Byzantine settings, several edge workers become untrustworthy or even adversarial in both communication and computation, i.e., Byzantine edge workers, which can transmit malicious messages to the central edge server and mislead the learning process. As revealed in \cite{FLByzantine-SGD2,Coordinate-wise-trimmed-mean,Krum}, even one Byzantine edge worker can arbitrarily manipulate the global model and incur significant performance degradation, yielding Byzantine fault tolerance a critical consideration in distributed learning. To counter Byzantine attacks, multiple robust aggregation-based Byzantine-tolerant schemes have been proposed recently. In particular, instead of directly computing the naive mean, the central edge server estimates the global update by utilizing geometric median (GM) \cite{Geometric-median,FLByzantine-SGD3,FLRobust-GM1,FLRobust-GM2}, coordinate median \cite{Coordinate-wise-trimmed-mean}, trimmed median \cite{Coordinate-wise-trimmed-mean}, Krum \cite{Krum}, robust stochastic aggregation \cite{RSA}, and iterative filtering \cite{Iterative-filtering} to tolerate Byzantine attacks. Besides, the authors in \cite{SignSGD2} proposed Majority Vote-based SignSGD by leveraging one-bit gradient quantization to improve communication efficiency and defend against attacks from a small number of Byzantine edge workers, and \cite{Election-coding} further enhanced Byzantine tolerance by detecting redundant gradient computation, as in \cite{DETOX,DRACO}. Moreover, the authors in \cite{Mimic} demonstrated that combining the robust aggregation rules with a re-sampling strategy can achieve Byzantine robustness while alleviating the impact of data heterogeneity, which extends the research of Byzantine tolerance to not identically and independently distributed (non-i.i.d.) scenarios.

\par
Unfortunately, the huge computational cost induced by robust aggregation and the high communication overhead caused by periodic update exchange are intolerable in current distributed learning systems, which requires additional mechanisms for efficient model aggregation. Recently, by integrating communication and computation, over-the-air computation (AirComp) has become a leading analog model aggregation paradigm, which is achieved by exploiting the waveform superposition property of a multiple-access channel, yielding high spectrum efficiency and low transmission latency \cite{FLAirComp1,FLAirComp2,FLPrivacy-free}. Specifically, the authors in \cite{FLAirComp2} leveraged AirComp for fast and reliable model aggregation by jointly designing receiver beamforming and device scheduling strategy. A gradient sparsification and random linear projection scheme were specialized in \cite{Channel-inverse1} to further reduce the dimension of model updates, thereby shedding light on the deployment of AirComp in bandwidth-limited scenarios. Besides, the authors in \cite{FLPrivacy-free} exploited the inevitable receiver noise to alleviate the privacy concerns for free by developing an adaptive power control scheme in the AirComp-enabled distributed learning systems. To further enhance communication efficiency and facilitate the deployment of AirComp in the current wireless communication systems, the authors in \cite{One-bit-AirComp1} introduced digital modulation strategies into AirComp, which is achieved by leveraging one-bit quantization of gradient with digital quadrature amplitude modulation (QAM) at the edge workers, and over-the-air Majority Vote at the central edge server.

\par
Despite the communication and computation benefits, the naive mean operation of AirComp makes it vulnerable to Byzantine attacks, yielding severe security concerns \cite{Byzantine-AirComp1}. To address this issue, the authors in \cite{Byzantine-AirComp1} approximated the GM by performing smoothed Weiszfeld algorithm over-the-air in an alternative manner, and the authors in \cite{Byzantine-AirComp2} proposed ROTAF, where the participated edge workers are divided into several disjoint clusters and the central edge server aggregates the update messages of each cluster via AirComp, followed by performing GM to achieve robust aggregation. The authors in \cite{Byzantine-AirComp3} developed a best-effort voting-based power control policy by enabling the edge workers to transmit with the maximum power to defend against Byzantine attacks. However, the current AirComp-enabled Byzantine-tolerant mechanisms still encounter several hindrances which should be taken into consideration. In particular, the required multiple aggregations in each learning round to achieve robust aggregation entails high communication and computation cost \cite{Byzantine-AirComp1,Byzantine-AirComp2}. Besides, as revealed in \cite{FLByzantine-SGD2}, with the increasing dimension of model parameters, the computation overhead required for the robust estimator, e.g., GM, is even higher than that of mini-batch gradient computation, which is unaffordable in the current distributed learning systems. Further, the existence of unfavorable channel propagations including fading and receiver noise degenerates the performance of robust aggregation, thereby restricting the tolerable number of Byzantine edge workers into a small range \cite{Byzantine-AirComp1,Byzantine-AirComp2,Byzantine-AirComp3}. Therefore, it is significant to design additional mechanisms to enhance communication efficiency while maintaining Byzantine tolerance in the distributed learning systems.

\subsection{Contributions}
Motivated by the aforementioned problems, we utilize the accessibility of data center networks to the entire global dataset, combined with AirComp, to design a Byzantine-tolerant and communication-efficient framework. Specifically, in this paper, instead of concentrating on the robust aggregation rules, we propose to leverage the redundant computation strategy \cite{DETOX,DRACO,Election-coding} for Byzantine robustness, resulting in a two-layer Hierarchical Vote framework. Besides, inspired by the hardware-friendly property and the robustness to noise \cite{One-bit-AirComp1} and Byzantine attacks \cite{SignSGD2} of the Majority Vote-based SignSGD, we develop an AirComp-enabled Hierarchical Vote framework in wireless data center networks for fast and reliable model aggregation, which paves the road to achieve communication-efficient and Byzantine-tolerant distributed learning systems.

\par
The major contributions of this article are summarized as follows: 1) From a systematic perspective, we develop a Hierarchical Vote framework in wireless distributed data center networks to cope with Byzantine attacks, followed by leveraging AirComp for fast model aggregation. 2) From a theoretical perspective, we demonstrate the robustness of the developed scheme in the presence of Byzantine attacks and receiver noise by characterizing the global decoding bit error probability and the convergence behavior. 3) Numerical simulations are conducted to validate the theoretical achievements and demonstrate that the proposed framework can achieve robustness to Byzantine attacks and receiver noise. Besides, by comparing with the existing mechanisms, we demonstrate the advancements of the proposed framework in reducing communication and computation cost.

\subsection{Organization and Notations}
The remainder of this article is organized as follows: Section \ref{SecII} introduces the AirComp-enabled Hierarchical Vote framework in the wireless data center networks under Byzantine settings. In Section \ref{SecIII}, we theoretically analyze the Byzantine-tolerant error bound of the proposed framework, followed by characterizing the convergence behavior under non-convex scenarios. Simulation results are presented in Section \ref{SecIV} to demonstrate the advantages of the AirComp-enabled Hierarchical Vote. Finally, Section \ref{SecV} concludes this work.

\par
\emph{Notations}: Italic and boldface letters denote scalar and vector (matrix), respectively. $\mathbb{R}^{m\times n}$ and $\mathbb{C}^{m\times n}$ denote the real and complex domains with the space of $m\times n$, respectively. For a positive integer $i$, we let $[i]\triangleq\{1,\ldots,i\}$. The operators $(\cdot)^T,(\cdot)^H$, and $\mathbb{E}\,(\cdot)$ represent the transpose, Hermitian transpose, and statistical expectation, respectively. $\pmb I_n$ denotes the identity matrix with the space of $n\times n$. $\pmb E_{ij}$ denotes the element in row $i$ and column $j$ in matrix $\pmb E$. The operator $|\cdot|$ is the cardinality of a set or the absolute value of a scalar number, and $||\cdot||$ denotes the Euclidean norm.

\section{System Model}\label{SecII}
In this section, we first elaborate on a one-bit distributed learning paradigm in the presence of Byzantine attacks, followed by proposing a robust Hierarchical Vote scheme to suppress the undesired bias during the learning process. Furthermore, over-the-air computation is introduced to support fast model aggregation.

\subsection{One-Bit Distributed Learning Protocol} \label{SecII-A}
As shown in Fig. \ref{fig:Distributed-learning-model}, we consider a wireless distributed data center network \cite{Intro-data-center1,Intro-data-center2,Intro-data-center3} consisting of one single-antenna central edge server, and $K$ single-antenna edge workers, which are indexed by set $\mathcal{K}=[K]$. Suppose that the central edge server has access to a global dataset $\mathcal{D}$ containing $N$ training data points, which are i.i.d. samples from a common distribution \cite{FLPrivacy-free,FLRobust-GM1,Intro-data-center3}. Specifically, in this paper, we mainly focus on the supervised learning applications \cite{FLRobust-GM1} where the training data point $\pmb\xi=(\pmb x,y)\in\mathcal{D}$ is composed of a feature vector $\pmb x$ and its corresponding label $y$. In the wireless data center network, the central edge server uniformly partitions the global dataset $\mathcal{D}$ into $K$ non-overlapping sub-datasets $\{\mathcal{D}_i\}_{i=1}^{K}$ with the same size $D=\lfloor N/K\rfloor$, i.e., $\mathcal{D}_i \cap\mathcal{D}_j=\emptyset$, $\cup_{i=1}^N\mathcal{D}_i=\mathcal{D}$, followed by allocating them to each edge worker based on a preset data allocation matrix $\pmb E\in\{0,1\}^{K\times K}$. The data allocation strategy states that sub-dataset $\mathcal{D}_i$ is allocated to the $k$-th edge worker if $\pmb E_{ki}=1$, and $\pmb E_{ki}=0$ otherwise. For instance, $\pmb E=\pmb I_K$ represents the canonical distributed learning settings where the dataset $\mathcal{D}_i$ is assigned to the $i$-th edge worker \cite{SignSGD2}. We define $\mathcal{S}_k=\{i:\pmb E_{ki}=1\}$ as the index set of datasets allocated to the $k$-th edge worker. Besides, we assume that the data allocation process is completely offline, i.e., all edge workers can receive the allocated datasets from the central edge server without any distortion \cite{Intro-data-center2}.

\par
For a given $d$-dimensional global model parameter $\pmb\omega\in\mathbb{R}^d$, denote $F(\pmb\omega)$ as the empirical global loss function. The goal of this data center network is to solve the following empirical risk minimization (ERM) problem in a distributed fashion:
\begin{equation}
\underset{\pmb\omega\in\mathbb{R}^d}{\text{minimize}}\;\;F(\pmb\omega):=\frac{1}{N}\sum\nolimits_{\pmb\xi\in\mathcal{D}}f(\pmb\omega,\pmb\xi), \label{Global-loss-function}
\end{equation}
where $f(\pmb\omega,\pmb\xi)$ denotes the sample-wise loss function measuring the training error of model $\pmb\omega$ on the data point $\pmb\xi\in\mathcal{D}$. An effective method to address problem (\ref{Global-loss-function}) is distributed mini-batch SGD, where the stochastic gradients from all edge workers are aggregated and averaged at the central edge server to update the global model $\pmb\omega$ iteratively.

\begin{figure}[t]
	\centering
	\includegraphics[width=0.45\textwidth]{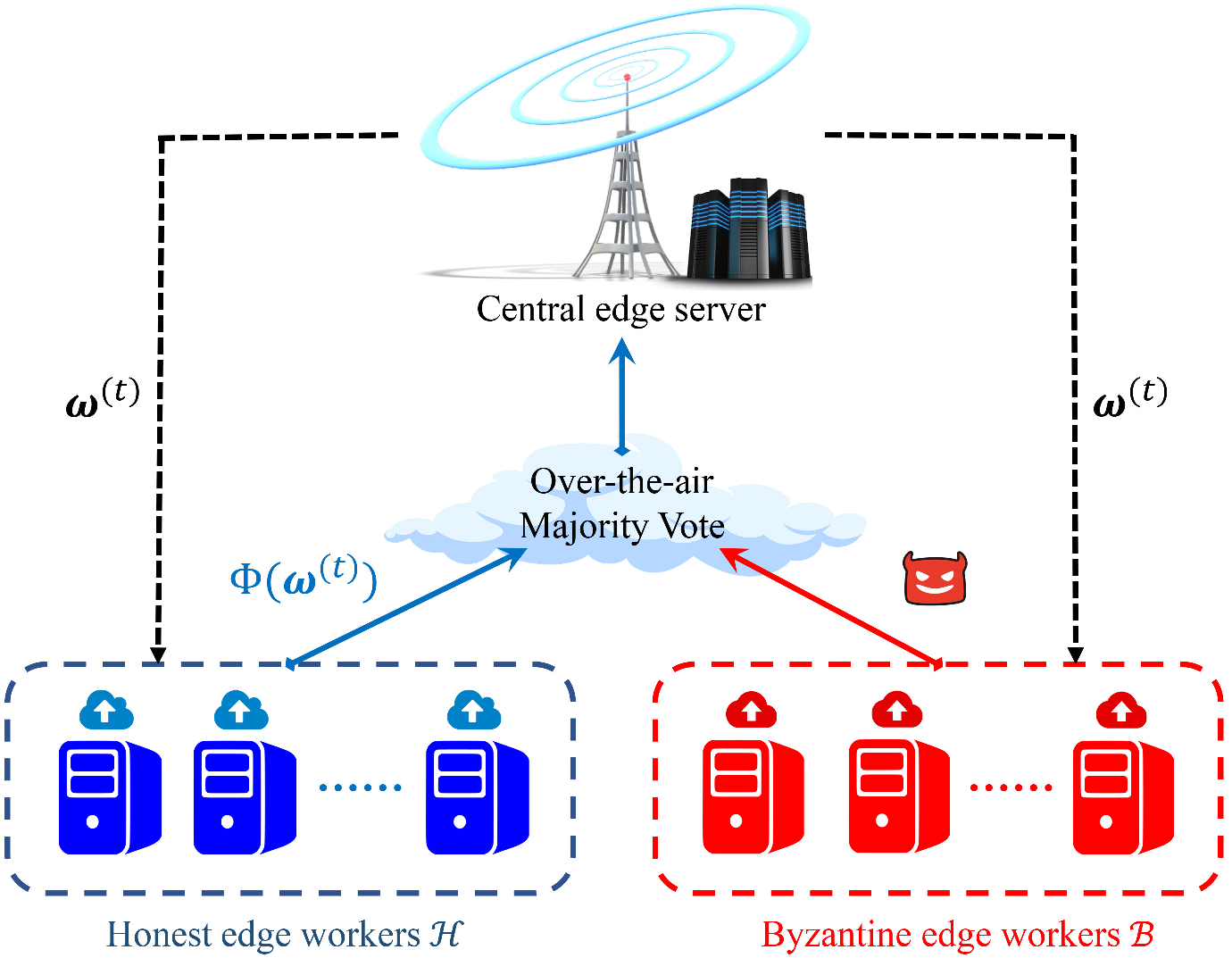}
	\caption{A wireless distributed data center network in the presence of Byzantine edge workers.}
	\label{fig:Distributed-learning-model}
\end{figure}

\par
To reduce the communication cost in the wireless distributed data center network and maintain the convergence rate of SGD, Majority Vote-based SignSGD \cite{Election-coding,SignSGD1,SignSGD2,One-bit-AirComp1} has become a prevalent distributed learning paradigm, where each edge worker encodes its transmitted stochastic gradient into a binary vector and the aggregated gradient is decoded via Majority Vote at the central edge server. Specifically, in the $t$-th learning round, the central edge server broadcasts the current global model parameter $\pmb\omega^{(t)}$ to all edge workers. Then, each edge worker $k\in\mathcal{K}$ independently selects a random mini-batch $\mathcal{A}_k^{(t)}\subseteq\mathcal{D}_k$ with cardinality $A$ \cite{Channel-inverse1}, followed by computing its local stochastic gradient $\pmb g_k^{(t)}$ based on $\pmb\omega^{(t)}$, which is given by:
\begin{equation}
\pmb g_k^{(t)}=\frac{1}{A}\sum\nolimits_{\pmb\xi\in\mathcal{A}_k^{(t)}}\nabla f(\pmb\omega^{(t)},\pmb\xi), \label{Stochastic-gradient}
\end{equation}
and the corresponding quantified one-bit stochastic gradient $\pmb m_k^{(t)}\in\{-1,1\}^{d}$ can be obtained by extracting the sign of each element in the local gradient $\pmb g_k^{(t)}$, i.e.,
\begin{equation}
\pmb m_k^{(t)}(j)=\text{sign}\big[\pmb g_k^{(t)}(j)\big],\;\;\forall j\in[d], \label{Quantified-gradient}
\end{equation}
where $\pmb m_k^{(t)}(j)$ denotes the $j$-th entry of $\pmb m_k^{(t)}$ and so as $\pmb g_k^{(t)}(j)$. The above local update process is operated in parallel across $K$ edge workers, and all edge workers upload the quantified gradients $\{\pmb m_k^{(t)}\}_{k=1}^{K}$, yielding a one-bit global update $\pmb v^{(t)}$ at the edge server. In particular, $\pmb v^{(t)}$ is a one-bit estimation of the global model update $\pmb g^{(t)}=\nabla F(\pmb\omega^{(t)})$ and can be obtained via element-wise Majority Vote based on the aggregated $\{\pmb m_k^{(t)}\}_{k=1}^{K}$ at the edge server \cite{SignSGD2}, which is given by
\begin{equation}
\pmb v^{(t)}(j)=\text{sign}\Big[\sum_{k=1}^{K}\pmb m_k^{(t)}(j)\Big]\in\{-1,1\},\;\;\forall j\in[d], \label{Majority-vote}
\end{equation}
where $\pmb v^{(t)}=[\pmb v^{(t)}(1),\ldots,\pmb v^{(t)}(d)]^T\in\{-1,1\}^d$ denotes the global model update. In essence, the Majority Vote process in (\ref{Majority-vote}) establishes the global model update by taking the majority opinions among the results from $K$ edge workers, which is achieved by element-wise selecting the more frequent element ($-1$ or $1$) in $\{\pmb m_k^{(t)}\}_{k=1}^{K}$. Finally, the central edge server updates the global model via gradient descent, i.e., $\pmb\omega^{(t+1)}=\pmb\omega^{(t)}-\eta\pmb v^{(t)}$
where $\eta>0$ denotes the learning rate. The global model $\pmb\omega$ is periodically updated until the given convergence criterion, e.g., a maximum number of learning rounds $T$ is reached.

\subsection{Byzantine Attack and Byzantine Tolerance}
The distributed nature empowers the wireless data center network as a promising paradigm for computational parallelism, however, it also causes the vulnerability of this system to faults and adversarial attacks. As revealed in \cite{SignSGD2}, the naive Majority Vote-based SignSGD, i.e., $\pmb E=\pmb I_K$, only achieves robustness when the number of adversarial edge workers is restricted in a small range, which requires additional mechanisms to provide stronger security guarantees. To support further analysis, we present the Byzantine settings and the definition of Byzantine tolerance in this section.

\par
Fig. \ref{fig:Distributed-learning-model} illustrates a wireless distributed data center network in the presence of Byzantine attacks, where there exist $B$ Byzantine edge workers that can upload malicious model updates to mislead the learning process. For brevity, we denote $\mathcal{B}$ as the set of Byzantine edge workers and $c=B/K$ as the corruption level to represent the fraction of Byzantine edge workers. Besides, we assume a practical scenario where the identities of the Byzantine edge workers are unavailable to the central edge server and the other honest edge workers during the learning process \cite{Coordinate-wise-trimmed-mean}. We denote the corrupted updates sent by a Byzantine edge worker as an arbitrary $d$-dimensional vector $\pmb *$, which can be strategically biased by the Byzantine edge workers to inject corrupted information to the system \cite{FLRobust-GM1,FLByzantine-SGD3}. Hence, to include the Byzantine settings, the local update oracle can be modified as follows \cite{Krum,Coordinate-wise-trimmed-mean,FLRobust-GM1}:
\begin{equation}
\pmb m_k^{(t)}=\left\{
\begin{aligned}
&\Phi(\pmb\omega^{(t)},\mathcal{S}_k),&k&\in\mathcal{H} \\ 
&\pmb *,&k&\in\mathcal{B}, \label{Local-update-with-byzantine} \\ 
\end{aligned}
\right.
\end{equation}
where $\mathcal{H}=\mathcal{K} \backslash\mathcal{B}$ denotes the set of honest edge workers and $\Phi(\pmb\omega^{(t)},\mathcal{S}_k)$ denotes the normal local update process, which will be elaborated in Section \ref{SecII-C}.

\begin{figure}[t]
	\centering
	\includegraphics[width=0.45\textwidth]{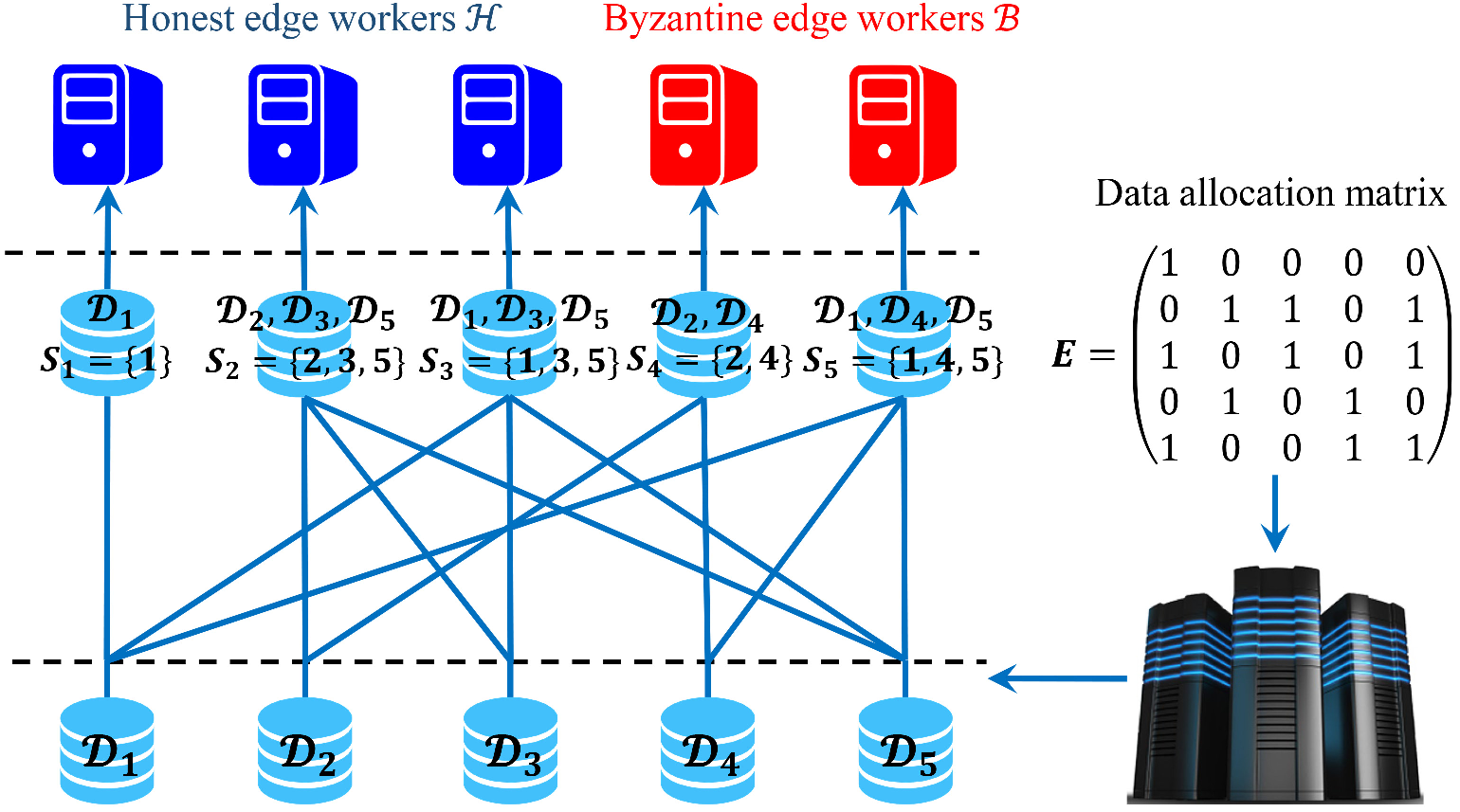}
	\caption{Data allocation process for edge workers in $\mathcal{K}$.}
	\label{fig:Data-allocation}
\end{figure}

\par
To quantify the Byzantine tolerance of the distributed data center network, we present the following definition \cite{Election-coding}:
\begin{definition}
Given the data allocation matrix $\pmb E$ and the corruption level $c$, the distributed data center network is $(c,\epsilon)$-Byzantine tolerant, if it can tolerate any types of attacks from $c$-fraction of Byzantine edge workers with at least probability $1-\epsilon$, i.e.,
\begin{equation}
\Pr(\hat{\mu}\neq\mu)\leq\epsilon,\;\;\text{with}\;\;0\leq\epsilon<\frac{1}{2}, \label{Byzantine-tolerant}
\end{equation}
where $\hat{\mu}$ denotes the output at the central edge server via Majority Vote and $\mu$ is the corresponding true output value. Notice that, the case of $\epsilon=0$ is called perfect Byzantine tolerance.
\end{definition}
From (\ref{Byzantine-tolerant}), we can see that a lower value of $\epsilon$ yields a higher security guarantee of the wireless data center network.

\subsection{Byzantine-Tolerant Scheme via Hierarchical Vote} \label{SecII-C}
To achieve $(c,\epsilon)$-Byzantine tolerance in the distributed wireless data center network, we develop a Byzantine-tolerant Hierarchical Vote scheme based on Bernoulli coding framework \cite{Election-coding}. Briefly, each edge worker is assigned to several sub-datasets randomly and derives its quantified one-bit gradient based on the allocated datasets, then the central edge server estimates the global model update by taking the majority opinion of the quantified gradients sent by all edge workers, even though some are malicious.

\par
As shown in Fig. \ref{fig:Data-allocation}, the key operation of Hierarchical Vote is the \emph{Data Allocation Process}, where each dataset $\mathcal{D}_i,\,\forall i\in\mathcal{K}$ is assigned to multiple edge workers based on the data allocation matrix $\pmb E$, which remains invariant during the whole learning process. Compared to the canonical distributed settings with $\pmb E=\pmb I_K$, the design of $\pmb E$ in Hierarchical Vote framework follows Bernoulli random coding scheme, where each entry $\pmb E_{ki}$ of $\pmb E$ is drawn independently according to $\text{Bernoulli}(p)$, where $p$ denotes the probability of $\pmb E_{ki}=1$. Essentially, the data allocation process can be considered as a dataset selection scheme where each edge worker independently selects each dataset in $\{\mathcal{D}_i\}_{i=1}^{K}$ with probability $p$. We note that $p$ is an adaptive parameter, which can be flexibly adjusted to tolerate different corruption levels $c$. Besides, in order to ensure that all edge workers participate in the learning process, we assume that all diagonal elements of $\pmb E$ are $1$, and the above data allocation scheme can be considered as a Byzantine-tolerant extension of the traditional distributed learning settings.

\begin{figure}[t]
	\centering
	\includegraphics[width=0.45\textwidth]{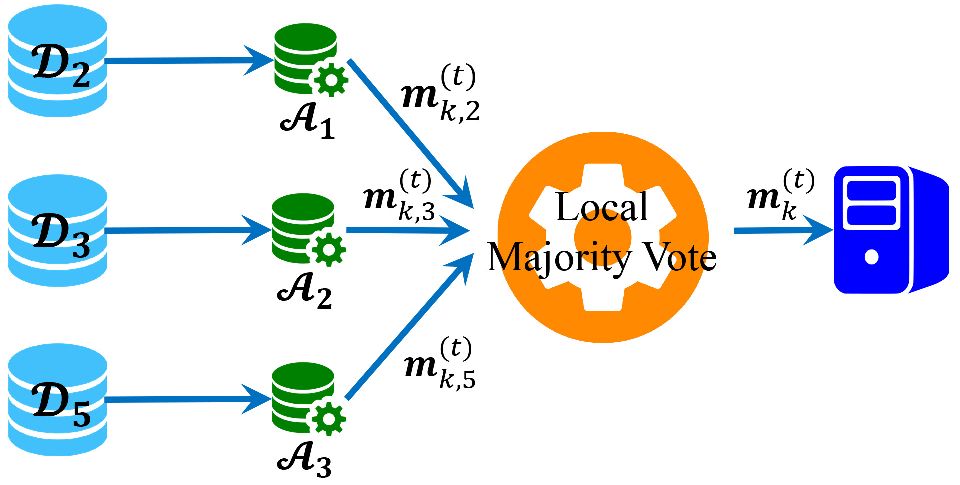}
	\caption{Local Majority Vote process of the $k$-th edge worker.}
	\label{fig:Local-update}
\end{figure}

\par
Based on the allocated datasets, each honest edge worker $k\in\mathcal{H}$ generates its local update message $\pmb m_k^{(t)}$ in parallel via \emph{Local Majority Vote}, i.e., $\Phi(\pmb\omega^{(t)},\mathcal{S}_k)$ defined in (\ref{Local-update-with-byzantine}). Specifically, as shown in Fig. \ref{fig:Local-update}, based on the allocated datasets $\{\mathcal{D}_i,\,i\in\mathcal{S}_k\}$, the $k$-th honest edge worker generates a random mini-batch $\mathcal{A}_{k,i}^{(t)}$ with $|\mathcal{A}_{k,i}^{(t)}|=A$ from each allocated datasets in $\mathcal{S}_k$, followed by computing $|\mathcal{S}_k|$ local stochastic gradients by using (\ref{Stochastic-gradient}), which are given by
\begin{equation}
\pmb g_{k,i}^{(t)}=\frac{1}{A}\sum\nolimits_{\pmb\xi\in\mathcal{A}_{k,i}^{(t)}}\nabla f(\pmb\omega^{(t)},\pmb\xi),\;\; i\in\mathcal{S}_k. \label{Stochastic-gradient-with-encode}
\end{equation}
Hence, the corresponding quantified one-bit local gradient set $\{\pmb m_{k,i}^{(t)}:i\in\mathcal{S}_k\}$ can be obtained via (\ref{Quantified-gradient}). Then, Majority Vote is performed locally to compress all the quantified gradients and obtain an encoded update message $\pmb m_k^{(t)}$, which is given by
\begin{equation}
\pmb m_k^{(t)}=\text{sign}\left(\sum\nolimits_{i\in\mathcal{S}_k}\pmb m_{k,i}^{(t)}\right). \label{Local-majority-voting}
\end{equation} 
On the other hand, the Byzantine edge workers in $\mathcal{B}$ can generate malicious update messages via (\ref{Local-update-with-byzantine}). Consequently, the central edge server aggregates all the local update messages $\{\pmb m_k^{(t)}\}_{k=1}^{K}$, followed by performing \emph{Global Majority Vote} to estimate the global model update via (\ref{Majority-vote}). Finally, the central edge server updates the global model $\pmb\omega^{(t)}$.

\par
In brief, the Hierarchical Vote framework achieves Byzantine tolerance by constructing a random data allocation scheme and two Majority Vote processes (one is manipulated locally to obtain the quantified local gradients at each edge worker, and the other is performed centrally to decode the global model update at the central edge server). Besides, we must note that the Hierarchical Vote framework is a redundant gradient computation-based scheme to achieve Byzantine tolerance, which is demonstrated to be an effective approach and widely used in the current Byzantine-tolerance related researches \cite{DETOX,DRACO,Election-coding}. Specifically, the redundant computation resources in the Hierarchical Vote framework include the cost for redundant data allocation at the central edge server, and the redundant computation for quantified local gradients $\{\pmb m_{k,i}^{(t)}:i\in\mathcal{S}_k\}$ at each edge worker in $\mathcal{H}$. In Section \ref{SecIII} and \ref{SecIV}, we will demonstrate that such redundant computation can exactly ensure Byzantine robustness via theoretical analysis and numerical simulations, respectively.

\subsection{Model Aggregation via Over-the-Air Computation}
According to the additive structure of (\ref{Majority-vote}), the Majority Vote decoding scheme completely fits the category of AirComp \cite{One-bit-AirComp1,FLAirComp1,FLAirComp2,FLAirComp4}, which can be leveraged to support a communication-efficient Byzantine-tolerant distributed learning system. To simplify the theoretical analysis, the downlink channels are assumed to be noise-free due to the less stringent power constraint at the base station \cite{FLPrivacy-free}, thereby each edge worker can receive the current global model $\pmb\omega^{(t)}$ without any distortion. We focus on the block flat-fading uplink channels whose channel coefficients remain invariant within one communication block. Besides, we assume that the channel coherence length allows the transmission of a quantified gradient vector, which is practically well justified when the $d$-dimensional model update $\pmb m_k^{(t)}$ is lightweight under a few tens of thousands of entries \cite{FLPrivacy-free}. Further, for the training of a high-dimensional model, the quantified gradients can be segmented and transmitted across multiple consecutive communication blocks \cite{Block-crossing}, and we leave this as our future work.

\par
In the $t$-th learning round, let $h_k^{(t)}\in\mathbb{C}$ denote the channel coefficient from the $k$-th edge worker to the central edge server. Owing to the recent advancements in channel estimation, we assume that perfect channel state information (CSI) is available among all edge workers for analytical ease \cite{Channel-inverse1,RIS}. Hence, we design the uniform-forcing transmit signal $\pmb x_{k}^{(t)}$ as follows:
\begin{equation}
\pmb x_k^{(t)}=\iota_{k}^{(t)}\pmb m_k^{(t)}=\rho^{(t)}\frac{\;(h_k^{(t)})^H}{|h_k^{(t)}|^2}\pmb m_k^{(t)}, \label{Transmit-signal}
\end{equation}
where $\iota_{k}^{(t)}=\rho^{(t)}/h_k^{(t)}$ denotes the transmit scalar, and $\rho^{(t)}$ represents the uniform power scaling factor, which is designed by the central edge server and broadcasted along with $\pmb\omega^{(t)}$ to all edge workers before each learning round. Given a maximum transmit power $P_0>0$, we can obtain the following power constraints for all edge workers in $\mathcal{K}$:
\begin{equation}
\big|\big|\pmb x_k^{(t)}\big|\big|^2\leq P_0,\;\;\forall k,t. \label{Power-constraint1}
\end{equation}
Based on the perfect CSI at each edge worker and recalling that $\pmb m_k^{(t)}\in\{-1,1\}^d$, the power constraints can be further simplified as follows:
\begin{equation}
\big(\rho^{(t)}\big)^2\leq\frac{P_0}{d}\big|h_k^{(t)}\big|^2,\;\;\forall k,t. \label{Power-constraint2}
\end{equation}

\par
Consequently, in the $t$-th learning round, all edge workers in $\mathcal{K}$ synchronously transmit the local updates $\{\pmb m_k^{(t)}\}_{k=1}^K$ to the central edge server, and the received aggregated signal at the central edge server is given by
\begin{equation}
\pmb y^{(t)}=\sum_{k=1}^{K}h_{k}^{(t)}\pmb x_{k}^{(t)}+\pmb n^{(t)}=\rho^{(t)}\sum_{k=1}^K\pmb m_{k}^{(t)}+\pmb n^{(t)}, \label{Aggregated-signal}
\end{equation}
where $\pmb n^{(t)}\in\mathbb{C}^{d}$ denotes the receiver noise, which can be modeled as i.i.d. additive white Gaussian noise (AWGN) according to $\mathcal{CN}(\pmb 0,N_0\pmb I_d)$. Then, the central edge server computes the processed update $\hat{\pmb r}^{(t)}$ via
\begin{equation}
\begin{split}
\hat{\pmb r}^{(t)}&=\text{Re}\left\{\pmb y^{(t)}\right\} \\
&=\underbrace{\rho^{(t)}\sum\nolimits_{k\in\mathcal{H}}\pmb m_k^{(t)}}_{\text{Normal updates}}+\underbrace{\rho^{(t)}\sum\nolimits_{k\in\mathcal{B}}\pmb m_k^{(t)}}_{\text{Corrupted updates}}+\underbrace{\text{Re}\left\{\pmb n^{(t)}\right\}}_{\text{AWGN}}. \label{Aggregated-signal-after-process}
\end{split}
\end{equation} 
Notice that, according to (\ref{Aggregated-signal-after-process}), except the corrupted model updates from Byzantine edge workers, the receiver noise $\pmb n^{(t)}$ can also cause inaccuracy in the model aggregation process. To quantify the influence of the receiver noise, we define SNR as the ratio of the maximum transmit power and the receiver noise power within one communication block, i.e., $\text{SNR}=P_0/d N_0$.

\par
An estimation of the global model update $\hat{\pmb v}^{(t)}$ can be obtained by performing Majority Vote defined in (\ref{Majority-vote}) at the central edge server, which is given by
\begin{equation}
\hat{\pmb v}^{(t)}=\text{sign}\left(\hat{\pmb r}^{(t)}\right)\in\{-1,1\}^d. \label{Central-majority-voting}
\end{equation}
Based on (\ref{Central-majority-voting}), the central edge server can update the global model and complete one learning round. We summarize the proposed AirComp-enabled Hierarchical Vote framework in the following \emph{Algorithm \ref{Algorithm-1}}.

\begin{algorithm}[t]
	\caption{AirComp-Enabled Hierarchical Vote Framework.}
	\KwIn{Global dataset $\mathcal{D}$, initial model parameter $\pmb\omega^{(0)}$, and the maximum number of learning rounds $T$.}
	\BlankLine
	\emph{Step 1: Data Allocation Process} \\
	The central edge server partitions $\mathcal{D}$ into $K$ subsets $\{\mathcal{D}_i\}_{i=1}^K$, and assigns them to all edge workers based on the designed Bernoulli coding matrix $\pmb E$. \\
	\BlankLine
	\emph{Step 2: Hierarchical Vote Process} \\
	\For{$t\leftarrow 0,1,\ldots,T-1$}{
		The central edge server broadcasts the current $\pmb\omega^{(t)}$. \\
		\For{each edge worker $k\in\mathcal{K}$ \emph{in parallel}}{
			\uIf{$k\in\mathcal{H}$}{
				Perform \emph{Local Majority Vote} via (\ref{Stochastic-gradient-with-encode}) and (\ref{Local-majority-voting}) to generate local updates. \\
			}
			\Else{
				Perform (\ref{Local-update-with-byzantine}) to generate malicious	updates.		
			}
		}
		The central edge server aggregates $\{\pmb m_k^{(t)}\}_{k=1}^{K}$ over-the-air and performs \emph{Global Majority Vote} via (\ref{Aggregated-signal})-(\ref{Central-majority-voting}). \\
		The central edge server updates the global model via $\pmb\omega^{(t+1)}=\pmb\omega^{(t)}-\eta\hat{\pmb v}^{(t)}$.
	}
	\BlankLine
	\KwOut{Global model parameter $\pmb\omega^{(T)}$.
	}\label{Algorithm-1}
\end{algorithm}

\par
\emph{Remark 1 (Resilience to other types of attacks)}: We discuss a more comprehensive Byzantine attack model concerning attacks with different frequencies and attacks form illegitimate edge workers. The former can be defended by systematic synchronization, where the central edge server can distinguish and exclude the Byzantine edge workers that upload corrupted messages with different frequencies. Besides, there also exist multiple approaches to defend asynchronous attacks, such as blockchain \cite{Blockchain}, and we leave this extension as our future work. The latter can be tackled by utilizing the celebrated direct sequence spread spectrum (DSSS) technique \cite{FLAirComp2}. Specifically, all the legitimate edge workers share a common spreading code designed by the central edge server and the illegitimate workers do not. During the model aggregation process, the central edge server can decode the aggregated signal based on the spreading code while the negative influence from illegitimate edge workers can be suppressed due to the absence of correct spreading code.

\par
\emph{Remark 2 (Synchronization in wireless data center networks)}: The deployment of the proposed framework entails synchronization among edge workers, which can be achieved in the current data center networks. Particularly, from the perspective of wireless communication, a common synchronous approach is the timing advance scheme \cite{Timing-advance} in 4G long-term evolution (LTE) and 5G new radio (NR), which only requires a tolerable timing synchronization offset with $0.1\mu s$ shorter than the typical length of cyclic prefix with $5\mu s$ and this offset can be compensated by channel equalization. Besides, the current data center networks also provide technique insurance for synchronization \cite{Syn-datacenter}. Specifically, the advanced hardwares such as Remote Direct Memory Access (RDMA) and Solid State Drives (SSDs), the advancements in synchronous algorithms, and the progress in cyber-physical systems pave the road to achieving the full synchronization of partial data center networks, that is, all the edge workers and the edge server.

\section{Performance Analysis} \label{SecIII}
In this section, we first theoretically analyze the inherent relationship among system security, corruption level, and the receiver noise, followed by characterizing the convergence behavior of the proposed AirComp-enabled Hierarchical Vote framework under the non-convex settings. Further, we conduct theoretical analysis and numerical simulations to demonstrate the advancements of the proposed framework in communication and computation efficiency in the wireless data center networks.

\subsection{Assumptions and Preliminaries}
We first list several widely used assumptions \cite{SignSGD1,SignSGD2} for the analysis of the one-bit Byzantine-tolerant system. Assumptions \ref{Assumption-1}, \ref{Assumption-2}, \ref{Assumption-3} are commonly leveraged in proving the convergence of non-convex optimization problems, and Assumption \ref{Assumption-4} characterizes the distribution properties of the data-stochasticity induced stochastic gradient noise.
\begin{assumption} \label{Assumption-1}
	(Smoothness): The global loss function $F(\pmb\omega)$ defined in (\ref{Global-loss-function}) is Lipschitz continuous with a vector of non-negative constants $\pmb L=[L_1,\ldots,L_d]$, which indicates that for any $\pmb\omega$, $\pmb\omega'$, the following inequality holds
	\begin{equation}
	\left|F(\pmb\omega')-F(\pmb\omega)-\pmb g^T(\pmb\omega'-\pmb\omega)\right|\leq\frac{1}{2}\sum_{j=1}^{d}L_j\left(\pmb\omega'(j)-\pmb\omega(j)\right)^2, \nonumber
	\end{equation}
	where $\pmb\omega(j)$ and $\pmb\omega'(j)$ denote the $j$-th entry of $\pmb\omega$ and $\pmb\omega'$, respectively, and $\pmb g=\nabla F(\pmb\omega)$.
\end{assumption}

\begin{assumption} \label{Assumption-2}
	(Global Loss Lower Bound): For any model parameter $\pmb\omega\in\mathbb{R}^d$, the global loss function $F(\pmb\omega)$ defined in (\ref{Global-loss-function}) has a lower bound $F^*$, i.e., $F(\pmb\omega)\geq F^*,\,\forall\pmb\omega$, which ensures convergence of the non-convex optimization problem to a stationary point.
\end{assumption}

\begin{assumption} \label{Assumption-3}
	(Unbiasedness and Variance Bound): The stochastic gradient $\tilde{\pmb g}$ generated from any training data sample $\pmb\xi\in\mathcal{D}$ is an unbiased estimate on the true gradient $\pmb g=\nabla F(\pmb\omega)$ with element-wise variance bound, i.e.,
	\begin{subequations}
	\begin{align}
	&\mathbb{E}\big[\tilde{\pmb g}(j)-\pmb g(j)\big]=0,\quad\forall j\in[d], \\
	&\mathbb{E}\big[\tilde{\pmb g}(j)-\pmb g(j)\big]^2\leq\sigma_j^2,\quad\forall j\in[d], \label{SGD-variance-bound}
	\end{align}
	\end{subequations}
	where $\tilde{\pmb g}(j)$ and $\pmb g(j)$ denote the $j$-th entry of $\tilde{\pmb g}$ and $\pmb g$, respectively, and $\pmb\sigma=[\sigma_1,\ldots,\sigma_d]^T$ is a vector containing the variance bound for each entry. 
\end{assumption}

\begin{assumption} \label{Assumption-4}
	(Stochastic Gradient Noise): Given the model parameter $\pmb\omega$ and any training data sample $\pmb\xi\in\mathcal{D}$, the element-wise stochastic gradient noise denoted by $\tilde{\pmb g}(j)-\pmb g(j)$ has a unimodal distribution, which is symmetric about the mean \cite{Election-coding,SignSGD1,One-bit-AirComp1}.
\end{assumption}
By integrating Assumptions \ref{Assumption-3} and \ref{Assumption-4}, we note that the mini-batch gradient $\pmb g_{k,i}^{(t)}(j)$ defined in (\ref{Stochastic-gradient-with-encode}) follows a unimodal distribution which is symmetric about the true gradient $\pmb g^{(t)}(j)$. The element-wise mean and variance satisfy the following relationships \cite{SignSGD1,SignSGD2}:
\begin{subequations}
	\begin{align}
	&\mathbb{E}\big[\pmb g_{k,i}^{(t)}(j)-\pmb g^{(t)}(j)\big]=0,\quad\forall j,i,k,t, \\
	&\mathbb{E}\big[\pmb g_{k,i}^{(t)}(j)-\pmb g^{(t)}(j)\big]^2\leq\sigma_j^2/A, \quad\forall j,i,k,t, \label{Mini-batch-SGD-variance-bound}
	\end{align}
\end{subequations}
where $A$ denotes the size of the mini-batch. Further, for analytical ease, we present the following definition of gradient-signal-to-data-noise ratio (GSNR) \cite{SignSGD1,SignSGD2,One-bit-AirComp1}:
\begin{definition} \label{Definition-2}
For a single entry $j\in[d]$, the GSNR $J_j$ of the mini-batch stochastic gradient $\pmb g_{k,i}^{(t)}(j)$ is defined as:
\begin{equation}
J_j=\sqrt{A}\,\frac{\big|\pmb g_{k,i}^{(t)}(j)\big|}{\sigma_j}. \label{Stochastic-SNR}
\end{equation}
\end{definition}
\noindent
From (\ref{Stochastic-SNR}), it is simple to verify that a larger size of mini-batch $A$ leads to a higher value of GSNR $J_j$.

\subsection{AirComp-Enabled Byzantine-Tolerant Error Bound} \label{SecIII-B}
As elaborated in Section \ref{SecII-C}, the Byzantine-tolerant Hierarchical Vote framework consists of two Majority Vote processes where one is performed locally at each honest edge worker to determine the signs of local gradients and the other is operated over-the-air to obtain the global majority opinions at the central edge server. For analytical ease, we focus on the element-wise error probability $P_{\text{err}}^{(t)}$ (or equivalently, decoding bit error probability), which quantifies the error probability when the central edge server estimates the sign of $\pmb g^{(t)}(j),\,\forall j\in[d]$.

\par
We first analyze the estimation error probability of \emph{Local Majority Vote} process, where the error comes from the stochastic gradient noise of the honest edge workers. Specifically, according to (\ref{Local-majority-voting}), our objective is to measure the element-wise estimation error probability $\pmb q^{(t)}(j)\triangleq\Pr\big[\pmb m_k^{(t)}(j)\neq\text{sign}\big(\pmb g^{(t)}(j)\big)\big]$ when locally determining the signs of $\pmb g^{(t)}(j)$. To this end, for any honest edge worker $k\in\mathcal{H}$ and given the indexes of the allocated datasets $\mathcal{S}_k$, we present the following proposition to provide an upper bound of the conditional local estimation error $\pmb q_{\mathcal{S}_k}^{(t)}(j)\triangleq\Pr\big[\pmb m_k^{(t)}(j)\neq\text{sign}\big(\pmb g^{(t)}(j)\big)\,\big|\,\mathcal{S}_k\big]$.
\begin{proposition} \label{Proposition-1}
Suppose $\mathcal{S}_k$ denote the index set of datasets allocated to an honest edge worker $k\in\mathcal{H}$, the upper bound probability of this edge worker transmitting a wrong sign $\pmb m_k^{(t)}(j)$ to the central edge server is given as:
\begin{equation}
\pmb q_{\mathcal{S}_k}^{(t)}(j)=\Pr\left(\pmb m_k^{(t)}(j)\neq\text{sign}\left(\pmb g^{(t)}(j)\right)\,\big|\,\mathcal{S}_k\right)\leq\frac{1}{\,J_j\sqrt{|\mathcal{S}_k|}\,}, \label{Proposition-1-relationship}
\end{equation}
where $J_j$ denotes the element-wise GSNR defined in (\ref{Stochastic-SNR}).
\end{proposition}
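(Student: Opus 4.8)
The plan is to split the \emph{Local Majority Vote} error at coordinate $j$ into two stages: first bound the chance that a single per-dataset sign $\pmb m_{k,i}^{(t)}(j)$ disagrees with $\text{sign}(\pmb g^{(t)}(j))$, then bound the chance that the majority of the $|\mathcal{S}_k|$ such (conditionally i.i.d.) signs is wrong. Assume without loss of generality $\pmb g^{(t)}(j)>0$ and set $q_0\triangleq\Pr[\pmb g_{k,i}^{(t)}(j)\leq 0]$. By Assumptions \ref{Assumption-3} and \ref{Assumption-4}, $\pmb g_{k,i}^{(t)}(j)$ is unbiased, symmetric and unimodal about $\pmb g^{(t)}(j)$ with variance at most $\sigma_j^2/A$; symmetry lets me write $q_0=\tfrac12\Pr[\,|\pmb g_{k,i}^{(t)}(j)-\pmb g^{(t)}(j)|\geq \pmb g^{(t)}(j)\,]$, and a Chebyshev bound (or the sharper Gauss unimodal inequality) then gives $q_0\leq\tfrac12\cdot\frac{\sigma_j^2/A}{(\pmb g^{(t)}(j))^2}=\frac{1}{2J_j^2}$ by the definition of $J_j$ in (\ref{Stochastic-SNR}); in particular $q_0<\tfrac12$ whenever $J_j$ is large enough for this bound to be informative.

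For the second stage, conditioned on $\mathcal{S}_k$ the indicators $X_i\triangleq\mathbf{1}[\pmb g_{k,i}^{(t)}(j)\leq 0]$, $i\in\mathcal{S}_k$, are i.i.d. $\text{Bernoulli}(q_0)$, and by the local vote rule (\ref{Local-majority-voting}) the event $\{\pmb m_k^{(t)}(j)\neq\text{sign}(\pmb g^{(t)}(j))\}$ implies $S\triangleq\sum_{i\in\mathcal{S}_k}X_i\geq|\mathcal{S}_k|/2$. Since $\mathbb{E}[S]=|\mathcal{S}_k|q_0$ with $q_0<\tfrac12$, the one-sided Chebyshev (Cantelli) inequality applied to the positive deviation $|\mathcal{S}_k|(\tfrac12-q_0)$ yields
\[
\pmb q_{\mathcal{S}_k}^{(t)}(j)\;\leq\;\Pr\!\big[S\geq\tfrac{|\mathcal{S}_k|}{2}\big]\;\leq\;\frac{|\mathcal{S}_k|\,q_0(1-q_0)}{|\mathcal{S}_k|^2\,(\tfrac12-q_0)^2}\;=\;\frac{q_0(1-q_0)}{|\mathcal{S}_k|\,(\tfrac12-q_0)^2}.
\]
Plugging in $q_0\leq\frac{1}{2J_j^2}$ (hence $\tfrac12-q_0\geq\tfrac12(1-1/J_j^2)$ and $q_0(1-q_0)\leq q_0$) collapses the right-hand side to a quantity of order $1/(|\mathcal{S}_k|J_j^2)$, which I would then relax to the stated bound $1/(J_j\sqrt{|\mathcal{S}_k|})$.

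The main obstacle is this last relaxation together with the edge cases of the second stage. One has to verify that $1/(|\mathcal{S}_k|J_j^2)$ (times the $J_j$-dependent constant coming from $(\tfrac12-q_0)^{-2}$) is indeed at most $1/(J_j\sqrt{|\mathcal{S}_k|})$ whenever the effective GSNR $J_j\sqrt{|\mathcal{S}_k|}$ is above a modest constant, and that when it is below that constant the claimed bound is either vacuous ($\geq 1$) or exceeds $\tfrac12$, which still suffices because a majority vote over a better-than-fair bit errs with probability at most $\tfrac12$. Additional care is needed in the i.i.d.-sign reduction itself, namely the treatment of ties in $\sum_i\pmb m_{k,i}^{(t)}(j)$ and the $\text{sign}(0)$ convention, and in choosing between the plain Chebyshev bound and the unimodal Gauss bound at the first stage so that the constants in the final estimate line up with the statement.
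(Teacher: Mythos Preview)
Your two-stage plan is exactly the paper's: bound the per-dataset sign error via the unimodal/symmetric tail estimate, then control the majority vote with Cantelli's inequality on the Bernoulli sum. The paper records the first stage as a lemma from the SignSGD literature, giving the Gauss-type bound $q_{k,i}\leq\frac{2}{9J_j^2}$ for $J_j\geq 2/\sqrt{3}$ and $q_{k,i}\leq\frac12-\frac{J_j}{2\sqrt{3}}$ otherwise, which is the sharper of the two options you mention.

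The one place where the paper differs from your write-up is precisely the step you flagged as the ``main obstacle.'' Instead of carrying the raw Cantelli ratio $\frac{q_0(1-q_0)}{|\mathcal{S}_k|(1/2-q_0)^2}$ through and then relaxing from order $1/(|\mathcal{S}_k|J_j^2)$ to $1/(J_j\sqrt{|\mathcal{S}_k|})$ with a separate case split, the paper immediately weakens Cantelli via the elementary inequality $\frac{a}{a+b}\leq\frac12\sqrt{a/b}$ (i.e., $1+x\geq 2\sqrt{x}$), which yields
\[
\Pr\!\Big[X_k\leq\tfrac{|\mathcal{S}_k|}{2}\Big]\;\leq\;\frac{1}{2\sqrt{|\mathcal{S}_k|}}\sqrt{\frac{1}{4(1/2-q_{k,i})^2}-1}.
\]
Combined with the Lemma-1 consequence $\frac{1}{4(1/2-q_{k,i})^2}-1\leq 4/J_j^2$ (valid in both regimes of $J_j$), this gives the stated $1/(J_j\sqrt{|\mathcal{S}_k|})$ in one line, with no edge-case bookkeeping. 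Your route is not wrong, and in fact your intermediate bound is tighter in $|\mathcal{S}_k|$, but the paper's square-root relaxation is the trick that cleanly dissolves the constant-matching and small-$J_j$ issues you were worried about.
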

\begin{proof}
Please refer to Appendix \ref{App-1}.
\end{proof}
As revealed in (\ref{Proposition-1-relationship}), the error probability bound of the case $|\mathcal{S}_k|=1$ (or equivalently, the canonical distributed settings with $p=0$) is $1/J_j$, which indicates that the canonical distributed settings cannot guarantee a right sign with high probability, yielding its inherent vulnerability to errors and attacks. In comparison, as the number of allocated datasets $|\mathcal{S}_k|$ increases, the edge worker $k\in\mathcal{H}$ can obtain a correct estimation on the sign of $\pmb g^{(t)}(j)$ with a higher probability. Intuitively, the high estimation probability is obtained at the expense of the redundant computation resources consumed on each dataset in $\mathcal{S}_k$. Moreover, according to (\ref{Stochastic-SNR}), a larger mini-batch size $A$ results in a higher value of GSNR $J_j$, thereby also contributing to the accuracy of estimation.

\par
Further, we analyze the impact of \emph{Data Allocation Process} on the element-wise estimation error probability $\pmb q^{(t)}(j)$. As elaborated in Section \ref{SecII-C}, $\mathcal{S}_k$ consists of a fixed allocated dataset $\mathcal{D}_k$ and $n_k$ redundantly assigned datasets, i.e., $|\mathcal{S}_k|=1+n_k$. We note that each redundant dataset is allocated with a well-designed probability $p$, yielding a random variable $n_k$ according to $\text{Binomial}(K-1,p)$. Based on this, we present the following theorem to explore the element-wise local estimation error probability $\pmb q^{(t)}(j)$ for an honest edge worker $k\in\mathcal{H}$.
\begin{theorem} \label{Theorem-1}
Given a coordinate $j\in[d]$, suppose $p\in(\frac{4}{J_j^2K},1]$ denote the probability of allocating dataset $\mathcal{D}_i,\,\forall i\neq k$ to edge worker $k$, i.e., $\pmb E_{ki}=1$. The estimation error probability of an honest edge worker $k$ transmitting a wrong sign to the central edge server is upper-bounded by: 
\begin{equation}
\pmb q^{(t)}(j)=\Pr\left(\pmb m_k^{(t)}(j)\neq\text{sign}\left(\pmb g^{(t)}(j)\right)\right)\leq\frac{1}{\,J_j\sqrt{Kp}\,}. \label{Theorem-1-relationship}
\end{equation}
\end{theorem}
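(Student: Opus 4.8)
The plan is to obtain $\pmb{q}^{(t)}(j)$ by averaging the conditional bound from Proposition~\ref{Proposition-1} over the randomness of the data allocation. Recall that for an honest worker $k$ we write $|\mathcal{S}_k| = 1 + n_k$, where $n_k \sim \text{Binomial}(K-1, p)$ counts the redundantly assigned datasets. By the law of total probability,
\begin{equation}
\pmb{q}^{(t)}(j) = \mathbb{E}_{n_k}\big[\pmb{q}_{\mathcal{S}_k}^{(t)}(j)\big] \leq \mathbb{E}_{n_k}\!\left[\frac{1}{J_j\sqrt{1+n_k}}\right] = \frac{1}{J_j}\,\mathbb{E}_{n_k}\!\left[\frac{1}{\sqrt{1+n_k}}\right]. \nonumber
\end{equation}
So the entire task reduces to upper-bounding $\mathbb{E}[(1+n_k)^{-1/2}]$ for a binomial $n_k$, and showing it is at most $1/\sqrt{Kp}$ under the stated condition $p \in (4/(J_j^2 K), 1]$.

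The key inequality I would use is the standard moment identity $\mathbb{E}[(1+n_k)^{-1}] \le \frac{1}{(K-1)p+1} \le \frac{1}{Kp}$ (this is the classical fact that for $n\sim\text{Binomial}(m,p)$, $\mathbb{E}[1/(1+n)] = \frac{1-(1-p)^{m+1}}{(m+1)p} \le \frac{1}{(m+1)p}$, applied with $m = K-1$). Then, by Jensen's inequality applied to the concave function $x \mapsto \sqrt{x}$,
\begin{equation}
\mathbb{E}\!\left[\frac{1}{\sqrt{1+n_k}}\right] = \mathbb{E}\!\left[\sqrt{\frac{1}{1+n_k}}\right] \leq \sqrt{\mathbb{E}\!\left[\frac{1}{1+n_k}\right]} \leq \sqrt{\frac{1}{Kp}} = \frac{1}{\sqrt{Kp}}. \nonumber
\end{equation}
Combining the two displays gives $\pmb{q}^{(t)}(j) \le \frac{1}{J_j\sqrt{Kp}}$, which is exactly \eqref{Theorem-1-relationship}.

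The one place that needs care is the role of the hypothesis $p > 4/(J_j^2 K)$. In the chain above I did not obviously use it — the Jensen bound $\frac{1}{J_j\sqrt{Kp}}$ holds for all $p \in (0,1]$. I would therefore check whether the condition is actually needed to make the bound \emph{nontrivial} (i.e.\ to ensure $\pmb{q}^{(t)}(j) < 1/2$, consistent with the $(c,\epsilon)$-Byzantine-tolerance framework where $\epsilon < 1/2$): indeed $\frac{1}{J_j\sqrt{Kp}} < \frac{1}{2}$ is equivalent to $Kp > 4/J_j^2$, i.e.\ $p > 4/(J_j^2 K)$. So the hypothesis is precisely the regime in which the stated bound is meaningful as an error probability. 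I expect the main obstacle to be not the Jensen step but justifying the binomial moment bound $\mathbb{E}[1/(1+n_k)] \le 1/(Kp)$ cleanly — one should either invoke the closed form of this expectation or verify it via the integral representation $\mathbb{E}[1/(1+n_k)] = \int_0^1 \mathbb{E}[s^{n_k}]\,ds = \int_0^1 (1-p+ps)^{K-1}\,ds$ and bound the integrand — and then making sure the diagonal entry $\pmb{E}_{kk}=1$ (which guarantees $\mathcal{D}_k \in \mathcal{S}_k$ deterministically, hence $|\mathcal{S}_k|\ge 1$) is correctly accounted for, which is exactly why the "$1+$" appears inside the square roots rather than just $n_k$.
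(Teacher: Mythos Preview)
Your proposal is correct and follows exactly the same route as the paper's proof: average the conditional bound of Proposition~\ref{Proposition-1} over $n_k\sim\text{Binomial}(K-1,p)$, apply Jensen to $\sqrt{\cdot}$, use the closed form $\mathbb{E}[(1+n_k)^{-1}]=\frac{1-(1-p)^K}{Kp}\le\frac{1}{Kp}$, and then read the hypothesis $p>4/(J_j^2K)$ as exactly the condition making the bound strictly below $1/2$. One small slip: the intermediate inequality $\mathbb{E}[(1+n_k)^{-1}]\le\frac{1}{(K-1)p+1}$ is false in general (Jensen on the convex map $x\mapsto 1/(1+x)$ points the other way), but your parenthetical closed-form argument already yields $\le 1/(Kp)$ directly, so the conclusion is unaffected.
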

\begin{proof}
Please refer to Appendix \ref{App-2}.
\end{proof}
The error probability bound of $\pmb q^{(t)}(j)$ in \emph{Theorem \ref{Theorem-1}} is an average of $\pmb q_{\mathcal{S}_k}^{(t)}(j)$ over $n_k$. We note that as the allocation probability $p$ increases, the element-wise local estimation error probability bound will decrease. Essentially, similar to \emph{Proposition \ref{Proposition-1}}, \emph{Theorem \ref{Theorem-1}} indicates that a larger $p$ leads to more allocated datasets, yielding a higher probability of edge worker $k$ to correctly estimate the sign of $\pmb g^{(t)}(j)$.

\par
We turn to analyze the estimation error probability $P_{\text{err}}^{(t)}\triangleq\Pr\big[\pmb v^{(t)}(j)\neq\text{sign}(\pmb g^{(t)}(j))\big]$ of \emph{Global Majority Vote} process in the presence of Byzantine edge workers and the receiver noise, where the central edge server performs over-the-air Majority Vote to obtain an estimation of the global model update. To measure the estimation performance of the AirComp-enabled Hierarchical Vote system, we consider the scenario where the global estimation bit error probability $P_{\text{err}}^{(t)}$ is maximized. In particular, the Byzantine edge workers in $\mathcal{B}$ collude with each other and send the inverse signs of the true gradient, i.e., $\pmb m_k^{(t)}(j)=-\text{sign}\left[\pmb g^{(t)}(j)\right],\,\forall k\in\mathcal{B}$ \cite{Election-coding}. Besides, to ensure the attacks take effect, we assume that the receiver noise $\pmb n^{(t)}$ cannot reverse the signs of $\pmb m_k^{(t)}$ from the Byzantine edge workers. According to (\ref{Aggregated-signal}), the aggregated signal consists of the update messages from honest and Byzantine edge workers and a scaled version of the receiver noise. In the $t$-th learning round, the following theorem provides an upper bound of the global decoding bit error probability $P_{\text{err}}^{(t)}$ at the central edge server.
\begin{theorem} \label{Theorem-2}
Given the allocation probability $p$, the element-wise local estimation bit error probability $\pmb q^{(t)}(j)$ can be upper-bounded by (\ref{Theorem-1-relationship}). If the corruption level $c$ satisfies
\begin{equation}
(1-c)\left(1-\pmb q^{(t)}(j)\right)>\frac{1}{2},\;\;\forall j\in[d], \label{Necessary-condition}
\end{equation}
the global decoding bit error probability at the central edge server is upper-bounded by:
\begin{equation}
\begin{split}
P_{\text{err}}^{(t)}&=\Pr\left(\text{sign}\big(\tilde{\pmb r}^{(t)}(j)\big)\neq\text{sign}\big(\pmb g^{(t)}(j)\big)\right) \\
&\leq\frac{1}{2}\sqrt{\frac{1-c}{K}}+\frac{1}{K\rho^{(t)}}\sqrt{\frac{N_0}{2}}\leq \epsilon,\;\;\forall j\in[d], \label{Theorem-2-relationship}
\end{split}
\end{equation}
where $\epsilon\in[0,\frac{1}{2})$ denotes a given error probability bound, which indicates that the AirComp-enabled Hierarchical Vote framework can achieve $(c,\epsilon)$-Byzantine tolerance.
\end{theorem}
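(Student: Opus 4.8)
The plan is to bound $P_{\text{err}}^{(t)}$ by decomposing the event that the central edge server decodes the wrong sign into the contribution from honest edge workers voting incorrectly and the contribution from the receiver noise, and then controlling each via a concentration inequality. Fix a coordinate $j \in [d]$ and, without loss of generality, assume $\text{sign}(\pmb g^{(t)}(j)) = +1$; by the worst-case attack model all $B = cK$ Byzantine edge workers contribute $-1$, and the noise is assumed not to flip the Byzantine signs. Writing $\hat{\pmb r}^{(t)}(j) = \rho^{(t)}\big(\sum_{k\in\mathcal{H}} \pmb m_k^{(t)}(j) - B\big) + \text{Re}\{\pmb n^{(t)}(j)\}$, the decoding error occurs only if this quantity is nonpositive. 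Let $X_k \in \{0,1\}$ indicate that honest worker $k$ sends the wrong sign, so $\sum_{k\in\mathcal{H}}\pmb m_k^{(t)}(j) = (1-c)K - 2\sum_{k\in\mathcal{H}} X_k$, and by \emph{Theorem \ref{Theorem-1}} each $X_k$ has $\mathbb{E}[X_k] = \pmb q^{(t)}(j) \le \frac{1}{J_j\sqrt{Kp}}$. Condition (\ref{Necessary-condition}) guarantees $(1-c)K - 2(1-c)K\,\pmb q^{(t)}(j) - cK > 0$, i.e.\ the expected value of $\sum_{k\in\mathcal{H}}\pmb m_k^{(t)}(j) - B$ is strictly positive, so the decoding error is a genuine large-deviation event for the honest-worker sum plus a Gaussian-tail event for the noise.

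The key steps, in order, are: (i) split the error event by the union bound into $\{\rho^{(t)}(\sum_{k\in\mathcal{H}}\pmb m_k^{(t)}(j) - B) \le \tau\}$ for a suitable threshold $\tau>0$ and $\{\text{Re}\{\pmb n^{(t)}(j)\} \le -\tau\}$; (ii) for the first event, since the $X_k$ are independent Bernoulli (the honest workers draw mini-batches independently and $\pmb E$ is fixed), apply a one-sided concentration bound — either Hoeffding's inequality on $\sum X_k$, or more cleanly a second-moment / Chebyshev-type bound, which is what produces the $\sqrt{(1-c)/K}$ form in (\ref{Theorem-2-relationship}); (iii) for the second event, use the standard Gaussian tail bound for $\mathcal{N}(0, N_0/2)$ (the real part of the $\mathcal{CN}(0,N_0)$ noise has variance $N_0/2$), namely $\Pr(\text{Re}\{\pmb n^{(t)}(j)\} \le -\tau) \le \frac{1}{2}\exp(-\tau^2/N_0)$ or a Markov-type bound giving the $\frac{1}{K\rho^{(t)}}\sqrt{N_0/2}$ term; (iv) choose $\tau$ so that the two bounds combine additively into the stated right-hand side and simplify using $\mathbb{E}[\sum_{k\in\mathcal{H}} X_k] \le (1-c)K\,\pmb q^{(t)}(j)$ together with the variance $\le (1-c)K\,\pmb q^{(t)}(j)(1-\pmb q^{(t)}(j)) \le (1-c)K/4$; (v) finally invoke the hypothesis $P_{\text{err}}^{(t)} \le \epsilon$ bound as the definition of $(c,\epsilon)$-Byzantine tolerance to conclude.

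I expect the main obstacle to be step (iv): pinning down the precise threshold $\tau$ and the exact form of the concentration inequality that yields the clean coefficient $\frac{1}{2}\sqrt{(1-c)/K}$ rather than an exponential bound. This suggests the intended argument is \emph{not} Hoeffding but a Chebyshev/Cantelli one-sided bound applied to $\sum_{k\in\mathcal{H}} X_k$, using the variance estimate $\mathrm{Var} \le (1-c)K/4$ and the gap between the mean and the decision boundary provided by (\ref{Necessary-condition}); the subtlety is that the gap itself depends on $\pmb q^{(t)}(j)$, so one must carefully use the bound $\pmb q^{(t)}(j) \le \frac{1}{J_j\sqrt{Kp}}$ to lower-bound the gap while upper-bounding the variance, and verify monotonicity so the worst case is attained at the boundary of the admissible region for $c$. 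A secondary delicate point is the treatment of the noise term: the factor $1/(K\rho^{(t)})$ indicates the noise bound is obtained after normalizing $\hat{\pmb r}^{(t)}(j)$ by $K\rho^{(t)}$ and applying a first-absolute-moment (Markov) bound $\mathbb{E}|\text{Re}\{\pmb n^{(t)}(j)\}| = \sqrt{N_0/2}\cdot\sqrt{2/\pi}$ or a direct variance bound, so care is needed to match constants; I would allocate a fraction of the total error budget $\epsilon$ to each term and optimize, but the paper's stated bound already fixes the split, so the write-up should reverse-engineer $\tau$ from (\ref{Theorem-2-relationship}).
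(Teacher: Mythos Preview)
Your overall instinct---use a one-sided Cantelli/Chebyshev bound together with the variance estimate $(1-c)K\pmb q^{(t)}(j)(1-\pmb q^{(t)}(j))\le (1-c)K/4$ and the gap guaranteed by~(\ref{Necessary-condition})---is exactly right, but the decomposition you propose is more elaborate than what the paper actually does, and that is precisely why you are having trouble in step~(iv).

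The paper does \emph{not} introduce a threshold $\tau$ and split via a union bound into a ``honest-sum'' event and a separate ``noise'' event. Instead it absorbs the (scaled) noise directly into a single random variable
\[
\widetilde{X}\;=\;\sum_{k\in\mathcal{K}}X_k\;+\;\frac{1}{\rho^{(t)}}\,\mathrm{Re}\{\pmb n^{(t)}(j)\},
\]
computes $\mathbb{E}[\widetilde{X}]=(1-c)K(1-\pmb q^{(t)}(j))=K\zeta_j$ and $\mathrm{Var}[\widetilde{X}]=(1-c)K\pmb q^{(t)}(j)(1-\pmb q^{(t)}(j))+\tfrac{N_0}{2(\rho^{(t)})^2}$, and applies Cantelli's inequality \emph{once} to $\Pr(\widetilde{X}\le K/2)$. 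The resulting ratio $\mathrm{Var}/(\mathrm{Var}+\lambda^2)$ is then relaxed to $\tfrac12\sqrt{\mathrm{Var}/\lambda^2}$ via $v+\lambda^2\ge 2\lambda\sqrt{v}$, after which the two variance contributions are separated by $\sqrt{a+b}\le\sqrt{a}+\sqrt{b}$ and simplified using $\pmb q(1-\pmb q)\le 1/4$ and $\zeta_j>1/2$. This is what produces the additive form $\tfrac12\sqrt{(1-c)/K}+\tfrac{1}{K\rho^{(t)}}\sqrt{N_0/2}$ without any threshold optimization or Gaussian tail bound.

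Your union-bound route is not wrong in principle, but it forces you to pick $\tau$ and then balance a Cantelli-type term against either a Gaussian tail (which would give an exponential, not the stated polynomial, dependence on $N_0$) or a Markov bound on $|\mathrm{Re}\{\pmb n^{(t)}(j)\}|$ (which introduces an extra $\sqrt{2/\pi}$ factor you would then have to discard). The paper's single-Cantelli argument sidesteps all of this: the noise enters only through its variance, and the clean constants fall out automatically. If you rewrite your proof along these lines, steps~(i)--(iv) collapse into two lines of algebra.
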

\begin{proof}
Please refer to Appendix \ref{App-3}.
\end{proof}
From \emph{Theorem \ref{Theorem-2}}, the decoding bit error probability $P_{\text{err}}^{(t)}$ is related to the Byzantine attacks and the wireless environment. Conditioned on (\ref{Necessary-condition}), we note that the bound in (\ref{Theorem-2-relationship}) is a decreasing function about the number of edge workers $K$. For a sufficiently large value of $K$, the proposed framework can achieve arbitrarily small $P_{\text{err}}^{(t)}$, and (\ref{Necessary-condition}) reduces to $1-c\rightarrow 1/2$ since $\pmb q^{(t)}(j)\rightarrow 0$, indicating that the proposed framework can achieve robustness when the corruption level $c<1/2$ in the asymptotic regime of large $K$. Besides, according to (\ref{Theorem-1-relationship}) and (\ref{Necessary-condition}), a larger value of $p$ yields a higher level of Byzantine tolerance, i.e., a larger value of $c$, at the expense of redundant computation resources. The influence of wireless channel is revealed in the second term of (\ref{Theorem-2-relationship}). According to (\ref{Power-constraint2}), $\rho^{(t)}$ is restricted by the edge worker with the weakest channel response, i.e., the straggler, within the $t$-th learning round. Specifically, a straggler with $|h_k^{(t)}|\approx 0$ results in a small value of $\rho^{(t)}$, which degrades the receiver SNR and yields a high error probability.

\subsection{Convergence Analysis on Non-Convex Settings}
To support the AirComp-enabled Byzantine-tolerant framework in the widely used deep neural networks (DNN), we focus on the analysis of the convergence behavior in non-convex settings, i.e., the empirical global loss function $F(\pmb\omega)$ defined in (\ref{Global-loss-function}) is non-convex. As revealed in \cite{SignSGD1}, the expectation of gradient norm is used as an indicator of convergence for non-convex settings. Particularly, the proposed framework achieves a $\tau$-suboptimal solution if $\mathbb{E}\left[\frac{1}{T}\sum_{t=1}^{T}\left|\left|\nabla F(\pmb\omega^{(t-1)})\right|\right|_1^2\right]\leq\tau$ holds, which guarantees the convergence of the proposed framework to a stationary point. The convergence result of the non-convex scenario is given in the following theorem.
\begin{theorem} \label{Theorem-3}
Under Assumptions \ref{Assumption-1}, \ref{Assumption-2}, \ref{Assumption-3}, and \ref{Assumption-4}, given the learning rate $\eta=1\big/\sqrt{T||\pmb L||_1}$. Suppose that the corruption level $c$ satisfies (\ref{Necessary-condition}), then the averaged expected gradient norm after $T$ learning rounds can be upper-bounded by
\begin{equation}
\mathbb{E}\left[\frac{1}{T}\sum_{t=0}^{T-1}\big|\big|\pmb g^{(t)}\big|\big|_1\right]\leq\frac{1}{\sqrt{T}}\frac{\sqrt{||\pmb L||_1}}{\Delta}\left(F(\pmb\omega^{(0)})-F^*+\frac{1}{2T}\right), \label{Theorem-3-relationship}
\end{equation}
where
\begin{equation}
\Delta=\left(1-\sqrt{\frac{1-c}{K}}-\frac{\sqrt{2}}{K\rho_{\text{min}}}\sqrt{N_0}\right), \nonumber
\end{equation}
with
\begin{equation}
\rho_{\text{min}}=\underset{t\in[T]}{\min}\,\rho^{(t)}\geq\frac{\sqrt{2N_0}}{K\big[1-\sqrt{(1-c)/K}\big]}. \label{Satisfactory-channel-condition}	
\end{equation}
\end{theorem}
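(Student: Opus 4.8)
\textbf{Proof proposal for Theorem~\ref{Theorem-3}.}
The plan is to follow the classical SignSGD-style descent argument (as in \cite{SignSGD1,SignSGD2}) but with the per-coordinate sign-flip probability replaced by the global decoding bit error probability $P_{\text{err}}^{(t)}$ bounded in Theorem~\ref{Theorem-2}. First I would start from the smoothness inequality in Assumption~\ref{Assumption-1} applied to $\pmb\omega^{(t+1)}=\pmb\omega^{(t)}-\eta\hat{\pmb v}^{(t)}$, which gives
\begin{equation}
F(\pmb\omega^{(t+1)})-F(\pmb\omega^{(t)})\leq-\eta\,(\pmb g^{(t)})^T\hat{\pmb v}^{(t)}+\frac{\eta^2}{2}\sum_{j=1}^{d}L_j, \nonumber
\end{equation}
using $(\hat{\pmb v}^{(t)}(j))^2=1$. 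The key observation is that $-\pmb g^{(t)}(j)\hat{\pmb v}^{(t)}(j)=-|\pmb g^{(t)}(j)|$ when $\hat{\pmb v}^{(t)}(j)=\text{sign}(\pmb g^{(t)}(j))$ and $+|\pmb g^{(t)}(j)|$ otherwise, so taking conditional expectation over the data-stochasticity and receiver noise,
\begin{equation}
\mathbb{E}\big[-\pmb g^{(t)}(j)\hat{\pmb v}^{(t)}(j)\big]\leq-|\pmb g^{(t)}(j)|\big(1-2P_{\text{err}}^{(t)}\big). \nonumber
\end{equation}
Summing over $j$ yields $\mathbb{E}[-(\pmb g^{(t)})^T\hat{\pmb v}^{(t)}]\leq-(1-2P_{\text{err}}^{(t)})\|\pmb g^{(t)}\|_1$.

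Next I would lower-bound the factor $1-2P_{\text{err}}^{(t)}$ uniformly in $t$. Using Theorem~\ref{Theorem-2}, $P_{\text{err}}^{(t)}\leq\frac{1}{2}\sqrt{(1-c)/K}+\frac{1}{K\rho^{(t)}}\sqrt{N_0/2}$, and since $\rho^{(t)}\geq\rho_{\text{min}}$ for all $t$, we get
\begin{equation}
1-2P_{\text{err}}^{(t)}\geq 1-\sqrt{\frac{1-c}{K}}-\frac{\sqrt{2}}{K\rho_{\text{min}}}\sqrt{N_0}=\Delta>0, \nonumber
\end{equation}
where positivity of $\Delta$ is exactly guaranteed by the channel condition \eqref{Satisfactory-channel-condition} on $\rho_{\text{min}}$ (this is the role of that hypothesis). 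Plugging back, taking total expectation, and telescoping over $t=0,\ldots,T-1$:
\begin{equation}
\Delta\,\eta\sum_{t=0}^{T-1}\mathbb{E}\big[\|\pmb g^{(t)}\|_1\big]\leq F(\pmb\omega^{(0)})-\mathbb{E}[F(\pmb\omega^{(T)})]+\frac{\eta^2 T}{2}\|\pmb L\|_1\leq F(\pmb\omega^{(0)})-F^*+\frac{\eta^2 T}{2}\|\pmb L\|_1, \nonumber
\end{equation}
where the last step uses Assumption~\ref{Assumption-2}. Finally I would substitute $\eta=1/\sqrt{T\|\pmb L\|_1}$, so $\eta^2 T\|\pmb L\|_1/2=1/2$ and $1/(\Delta\eta)=\sqrt{T\|\pmb L\|_1}/\Delta$, divide by $T$, and rearrange to obtain \eqref{Theorem-3-relationship}.

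The main subtlety I expect is the step that converts the sign-mismatch event into the bit error probability $P_{\text{err}}^{(t)}$: one must be careful that Theorem~\ref{Theorem-2} bounds $\Pr(\hat{\pmb v}^{(t)}(j)\neq\text{sign}(\pmb g^{(t)}(j)))$ under the worst-case colluding-Byzantine attack and a noise realization that does not reverse Byzantine signs, so the same bound dominates the expectation over the honest stochastic gradient noise and AWGN appearing in the descent inequality; invoking it requires that the conditioning in Theorem~\ref{Theorem-2} is compatible with the expectation being taken here (in particular that the stated bound is a valid upper bound on $P_{\text{err}}^{(t)}$ pointwise in $\pmb\omega^{(t)}$, which it is since it does not depend on $\pmb g^{(t)}$). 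A secondary point is justifying that $\rho^{(t)}\geq\rho_{\text{min}}$ can indeed be enforced for every round, i.e.\ that the power-control factor can be chosen to meet \eqref{Satisfactory-channel-condition}; this is an assumption carried into the theorem rather than something to be proved. Everything else is the routine telescoping-and-substitution bookkeeping standard in non-convex SignSGD analyses.
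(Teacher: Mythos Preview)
Your proposal is correct and follows essentially the same argument as the paper's proof in Appendix~\ref{App-4}: apply the smoothness inequality to the update, decompose the inner product via the sign-mismatch indicator, take conditional expectation and invoke the bit-error bound of Theorem~\ref{Theorem-2}, replace $\rho^{(t)}$ by $\rho_{\min}$ to get the uniform factor $\Delta$, telescope using Assumption~\ref{Assumption-2}, and substitute $\eta=1/\sqrt{T\|\pmb L\|_1}$. One small remark: your (correct) computation $\eta^{2}T\|\pmb L\|_{1}/2=1/2$ produces the additive constant $\tfrac{1}{2}$ rather than the $\tfrac{1}{2T}$ stated in \eqref{Theorem-3-relationship}; the paper's own derivation has the same slip when it sums the per-round $\tfrac{1}{2T}$ term over $T$ rounds, so do not be surprised that your final constant differs from the printed bound.
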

\begin{proof}
Please refer to Appendix \ref{App-4}.
\end{proof}
From \emph{Theorem \ref{Theorem-3}}, conditioned on (\ref{Necessary-condition}) and (\ref{Satisfactory-channel-condition}), we observe that the averaged expected gradient norm converges to zero as the number of learning rounds $T$ increases, indicating that the proposed framework leads the non-convex distributed learning task to a stationary point with a sufficient number of learning rounds, despite the existence of the Byzantine edge workers and the receiver noise. Besides, the bound in (\ref{Theorem-3-relationship}) is also a decreasing function of $K$, indicating that a larger value of $K$ contributes to the convergence speed, which confirms the previous conclusions. Moreover, similar to (\ref{Theorem-2-relationship}), the influence of channel fading is reflected in $\rho^{(t)}$. In particular, the straggler issue increases the global decoding error probability, thereby degenerating the convergence rate. To alleviate this issue, multiple effective approaches have been proposed. Specifically, a binary device selection scheme specialized in \cite{FLAirComp2} excludes the edge workers based on a preset threshold at the expense of reducing the volume of data. Further, reconfigurable intelligent surface technique turns out to be a promising technology to support fast and reliable model aggregation by reconfiguring the channel propagation environment \cite{RIS}. We leave the extension to tackle the unfavorable channel propagations as our future work.

\begin{figure}[t]
	\centering
	\begin{minipage}[p]{0.45\textwidth}
		\centering
		\includegraphics[width=0.90\textwidth]{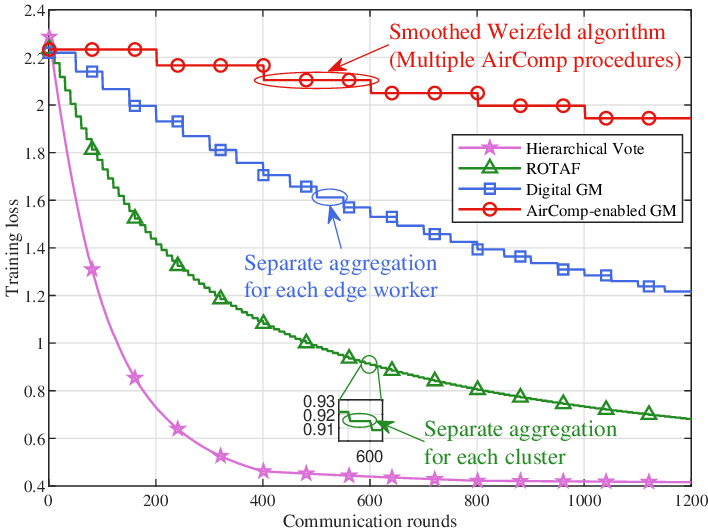}
	\end{minipage}
	\begin{minipage}[p]{0.45\textwidth}
		\centering
		\includegraphics[width=0.90\textwidth]{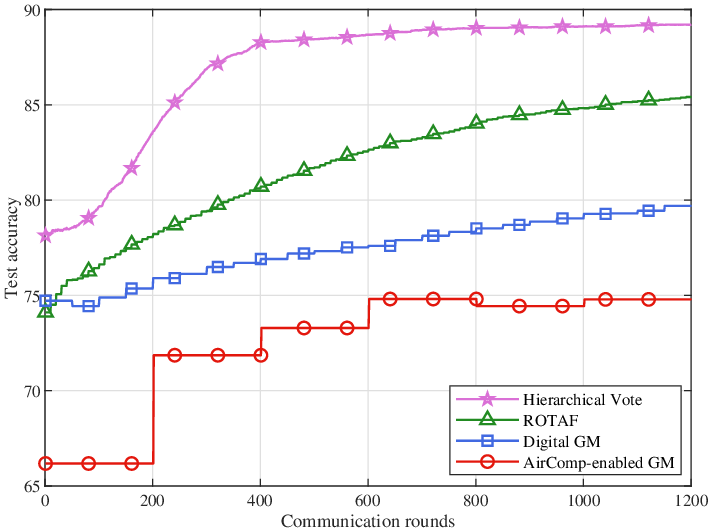}
	\end{minipage}
	\caption{Comparison of the four Byzantine-tolerant schemes in communication and computation efficiency. The default parameters are $K=50$, $B=5$, $p=0.1$, and $U=200$.}
	\label{fig:Comm-comp-eff}
\end{figure}

\subsection{Analysis of Communication and Computation Efficiency}
We comprehensively compare the Hierarchical Vote with the existing approaches \cite{FLRobust-GM2,Byzantine-AirComp1,Byzantine-AirComp2} concerning the communication and computation efficiency. From a theoretical perspective, AirComp outperforms the traditional digital communication schemes, which has been demonstrated in \cite{FLAirComp2}. Specifically, suppose $T_{Trad}$ and $T_{AirComp}$ denote the communication latency of the traditional digital schemes and AirComp, respectively. The communication latency ratio for each learning round is given by $r=\frac{T_{Trad}}{T_{AirComp}}\approx O(K)$, which indicates that $r$ scales approximately linearly with the number of edge workers. Particularly, if the number of participated edge workers $K$ increases, the traditional digital schemes will entail a significant communication latency compared with AirComp, which demonstrates the communication efficiency of AirComp. Besides, AirComp further reduces computation latency by integrating communication and computation without entailing additional resources for digital operations (for short, Digital) including encoding, decoding, and post-computing process.

\par	
Secondly, we compare the proposed framework with the existing Byzantine-tolerate schemes via numerical simulations. Given the total number of communication rounds, Fig. \ref{fig:Comm-comp-eff} shows that: 1) The convergence rate of AirComp-enabled GM is the lowest due to the multiple over-the-air aggregations for the convergence of smoothed Weiszfeld algorithm in each learning round, but it requires no high-cost post-processing procedures. 2) Digital GM separately aggregates the updates from all edge workers in a digital manner and its convergence rate declines with the increase of $K$. 3) ROTAF divides the edge workers evenly into $G$ clusters and separately receives the updates from all clusters via AirComp to perform GM. Thus, it can be seen that the inner aggregation process declines the convergence rate of the aforementioned schemes while the proposed framework avoids this and enjoys high communication efficiency. Table \ref{Table:comp-oper} lists the number of the computation operations that required for each scheme in one learning round. For instance, $1\times 50$ in the table indicates that in AirComp-enabled GM, all the $K=50$ edge workers are required to complete $1$ step of Local SGD in parallel during one learning round. The computational complexities of Local SGD and GM are $O(Ad)$ and $O(UKd)$ \cite{FLRobust-GM1}, respectively, where $U$ refers to the maximum iterations for the convergence of smoothed Weiszfeld algorithm satisfying $UK\gg A$. The computation cost becomes intolerable especially when the model dimension $d$ is large. Hence, due to the fact that the required cost for mini-batch gradient computation is far lower than that of GM and the edge workers in data center networks have sufficient computing resources to complete the multiple gradient computation operations in parallel, it is reasonable that the proposed framework outperforms the GM-based schemes in computation efficiency. Consequently, the proposed framework can be considered as a communication- and computation-efficient scheme in wireless data center networks.

\begin{table}[h]\scriptsize
	\centering
	\begin{tabular}{|>{}c ||>{}c|>{}c|>{}c|>{}c|}
		\hline
		\diagbox{\textbf{Schemes}}{\textbf{Operations}} & \textbf{Local SGD} & \textbf{GM} & \textbf{AirComp} &
		\textbf{Digital} \\
		\hline
		\textbf{AirComp-enabled GM} & $1\times 50$ & $0$ & $200$ & $0$ \\
		\hline
		\textbf{Digital GM} & $1\times 50$& $1$ & $0$ & $50$ \\
		\hline
		\textbf{ROTAF} & $1\times 50$ & $1$ & $10$ & $0$ \\
		\hline
		\textbf{Hierarchical Vote} & $5\times 50$ & $0$ & $1$ & $0$ \\
		\hline
	\end{tabular}
    \caption{The number of computation operations required for one learning round of the compared schemes.}
    \label{Table:comp-oper}
\end{table}

\section{Numerical Simulation} \label{SecIV}
In this section, we conduct numerical simulations to gain insights into the advantages of the AirComp-enabled Hierarchical Vote framework in the presence of Byzantine attacks.

\subsection{Simulation Settings}
We consider a wireless data center network consisting of one central edge server and $K=50$ edge workers. The wireless channel coefficients $\{h_k^{(t)}\}$ between the central edge server and each edge worker are assumed to be distributed according to $\mathcal{CN}(0,1)$. The transmit SNR is set to be $10\,\text{dB}$. To account for Byzantine attacks, the corruption level $c$ is set to be $0.4$, i.e., there exist $30$ honest and $20$ Byzantine edge workers in this system. We consider the image classification task based on the well-known MNIST dataset, which contains $10$ classes of handwritten digits ranging from $0$ to $9$. We test the learning performance of the proposed algorithm in the presence of the following four types of Byzantine attacks, where the first type of attack represents the data poisoning attacks and the others indicate the model poisoning attacks:
\begin{itemize}
\item \emph{Label-flipping attack}: The Byzantine edge workers can reverse the labels of their local datasets \cite{Byzantine-AirComp1,Coordinate-wise-trimmed-mean}. Specifically, the Byzantine edge workers replace the label $y$ of each training data point with $9-y$. 
\item \emph{Mimic attack}: All Byzantine edge workers copy the outputs from a certain edge worker, which enhances data heterogeneity by over-emphasizing the updates from one worker and burying the others \cite{Mimic}.
\item \emph{Directional attack}: The Byzantine edge workers transmit an all-one vector to guide the global model to a certain direction \cite{Election-coding,DRACO}. 
\item \emph{Omniscient attack}: The Byzantine edge workers collude with each other and adaptively modify their model updates as $\pmb m_k^{(t)}=-\text{sign}\left(\sum_{k\in\mathcal{H}}\pmb m_k^{(t)}\right)$, so that the received signals at the central edge server tend to zero \cite{FLRobust-GM1,FLByzantine-SGD3}.
\end{itemize}
We measure the learning performance in terms of (cross-entropy) training loss evaluated over the training samples and test accuracy derived from the test dataset, with respect to the number of communication rounds. To illustrate the advancement of the proposed Hierarchical Vote framework, we compare the learning performance of the proposed \emph{Algorithm \ref{Algorithm-1}} with Majority Vote-enabled SignSGD in the presence and absence of Byzantine attacks, respectively. Besides, we compare the learning performance of \emph{Algorithm \ref{Algorithm-1}} with its noise-free version to illustrate the impact of receiver noise.

\begin{figure*}
	\centering
	\begin{minipage}{\linewidth}
		\includegraphics[scale=0.43]{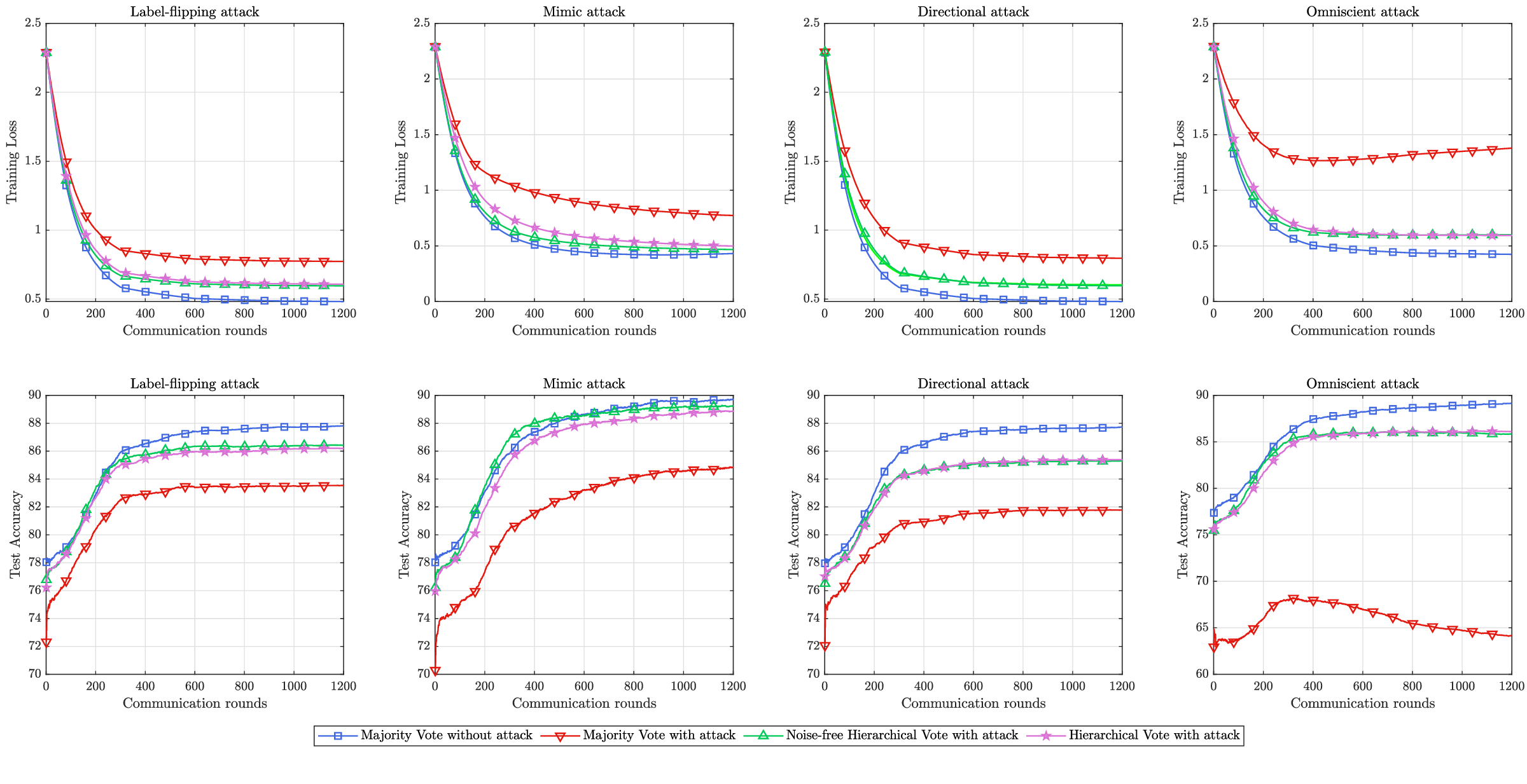}
		\vskip -10pt
		\caption{The learning performance of the proposed scheme in the presence of Byzantine attacks under convex settings.}
		\label{fig:Logistic-results}
	\end{minipage}
\end{figure*}

\subsection{Convex Settings: Logistic Regression}
We train logistic regression models on the MNIST dataset for convex settings under the mentioned four types of attacks, respectively. The corresponding training loss and test accuracy are plotted in Fig. \ref{fig:Logistic-results}. Specifically, we note that the learning performance of the Majority Vote-enabled SignSGD greatly degenerates in the presence of Byzantine attacks. By comparison, the proposed AirComp-enabled Hierarchical Vote scheme achieves approximately the same learning performance as the SignSGD schemes without attacks, which demonstrates the effectiveness of the proposed Hierarchical Vote scheme in tolerating multiple types of Byzantine attacks. Besides, by comparing the learning performance of the Hierarchical Vote with/without noise, we can conclude that the Majority Vote process at the central edge server is capable of achieving robustness to the inevitable receiver noise, which matches the conclusions in \cite{One-bit-AirComp1}.

\par
Fig. \ref{fig:Probability-results} shows the impact of allocation probability $p$ on the tolerance of the proposed AirComp-enabled Hierarchical Vote scheme against various Byzantine attacks. From a Byzantine tolerance perspective, we note that the learning performance of the proposed Hierarchical Vote scheme enhances as the allocation probability $p$ increases, which indicates that a higher value of $p$ yields a higher Byzantine tolerance of the proposed scheme and demonstrates the theoretical results in \emph{Theorem \ref{Theorem-1}}. From a convergence rate perspective, we note that a larger value of $p$ can improve the convergence rate, which comes from the redundant computational resource. Specifically, as revealed in \emph{Theorem \ref{Theorem-1}}, with the increase of $p$, each honest edge worker can estimate the true global update with a high probability, yielding a fast convergence rate.

\begin{figure*}[t]
	\begin{minipage}{\linewidth}
		\includegraphics[scale=0.43]{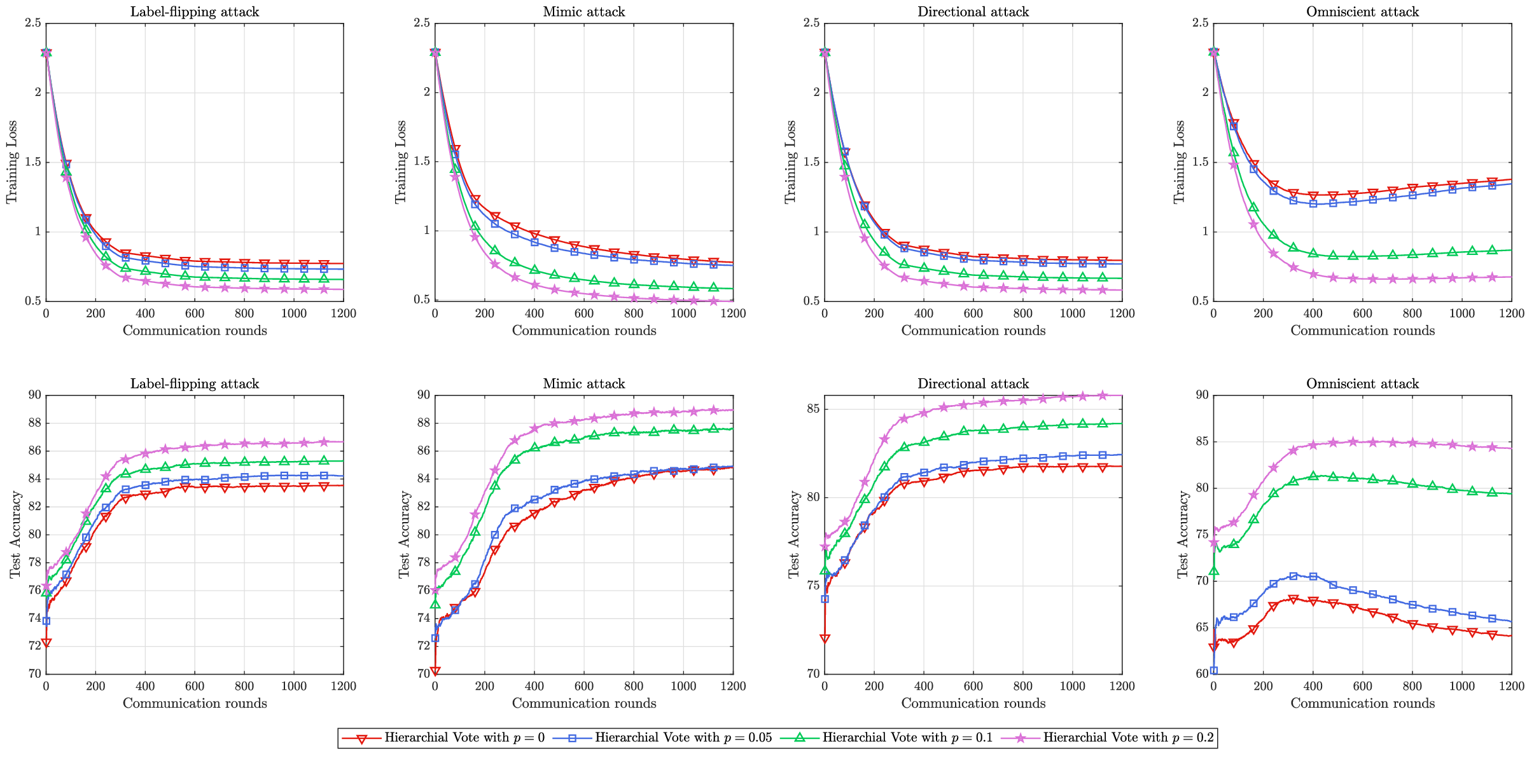}
		\vskip -10pt
		\caption{The impact of allocation probability $p$ on the learning performance under different types of Byzantine attacks.
		}
		\label{fig:Probability-results}
	\end{minipage}
	\vskip 15pt
	\begin{minipage}{\linewidth}
		\includegraphics[scale=0.43]{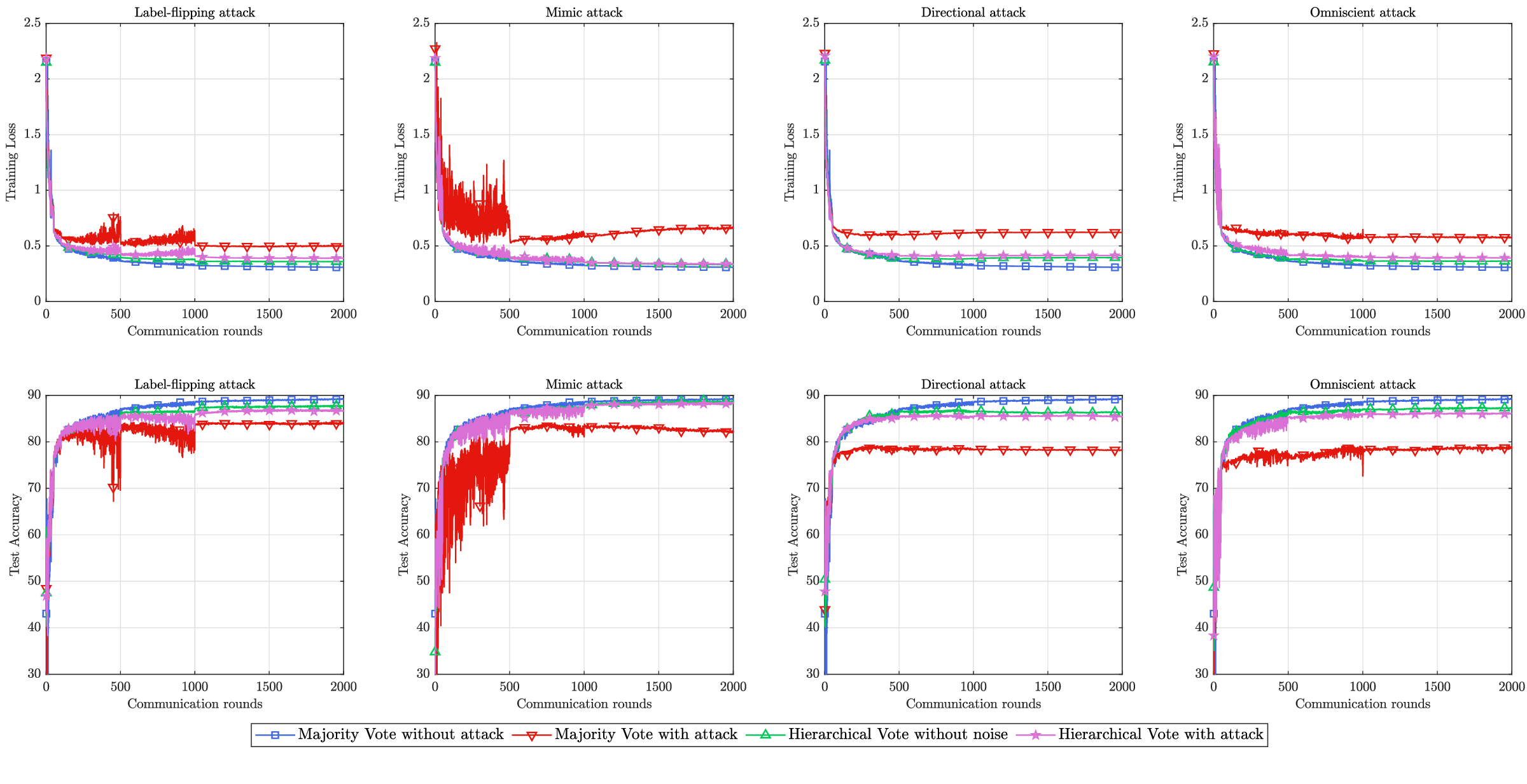}
		\vskip -10pt
		\caption{The learning performance of the proposed scheme in the presence of Byzantine attacks under non-convex settings.}
		\label{fig:CNN-results}
	\end{minipage}
\end{figure*}

\subsection{Non-Convex Settings: Convolution Neural Networks}
For non-convex settings, we train convolution neural network (CNN) models on the Fashion MNIST dataset under the four types of attacks, respectively. We modify the transmit SNR to be $5\,\text{dB}$, and the data allocation probability $p$ to be $0.15$. As shown in Fig. \ref{fig:CNN-results}, we demonstrate the vulnerability of the naive Majority Vote-enabled SignSGD scheme in the presence of Byzantine attacks and the robustness of the proposed AirComp-enabled Hierarchical Vote framework to Byzantine attacks. Besides, by comparing the learning performance of the Hierarchical Vote with/without noise, the impact of the receiver noise on the convergence behavior is illustrated. Specifically, when SNR is of a small value, the learning performance of Hierarchical Vote degenerates compared with its noise-free version. Furthermore, we can conclude that the proposed AirComp-enabled Hierarchical Vote scheme can achieve satisfactory convergence results under both the convex and non-convex settings.

\section{Conclusions} \label{SecV}
In this paper, we developed an AirComp-enabled Hierarchical Vote framework in the wireless data center networks by exploiting the waveform superposition property of a multiple-access channel, which achieves communication-efficient design while maintaining convergence rate in the presence of Byzantine attacks. Through theoretical analysis, we demonstrated the effectiveness of the AirComp-enabled Hierarchical Vote framework in achieving robustness when tackling Byzantine attacks and receiver noise. We further characterized the convergence behavior of the proposed framework under non-convex settings and revealed the influence of Byzantine attacks and wireless environment on the convergence rate. The numerical simulation results demonstrated that the AirComp-enabled Hierarchical Vote scheme can achieve satisfactory learning performance in the presence of receiver noise and multiple types of Byzantine attacks.

\appendices
\section{Proof of Proposition \ref{Proposition-1}} \label{App-1}
For any edge worker $k$, given the index set of the allocated datasets denoted by $\mathcal{S}_k=\{k,i_1,\ldots,i_{n_k}\}$, we focus on the error probability of the $j$-th entry in $\pmb g_k^{(t)}$ while the others can be obtained via the same procedures. We define a Bernoulli random variable $X_{k,i},\,i\in\mathcal{S}_k$ to represent the relationship between the quantified local gradient $\pmb m_{k,i}^{(t)}(j)$ and the quantified true global gradient $\text{sign}\big(\pmb g^{(t)}\big)$, i.e., $X_{k,i}=\mathds{1}_{\pmb m_{k,i}^{(t)}(j)=\text{sign}\left(\pmb g^{(t)}(j)\right)}$. As revealed in \cite{SignSGD1,SignSGD2}, we have the following lemma:
\begin{lemma} \label{Lemma-1}
According to Assumption \ref{Assumption-4} and Definition \ref{Definition-2}, the error probability of the element-wise quantified local gradient can be upper-bounded by
\begin{equation} \label{Lemma-1-relationship}
q_{k,i}=\Pr\left(X_{k,i}=0\right)\leq\left\{
\begin{aligned}
&\frac{2}{9}\frac{1}{J_j^2}, &J_j\geq\frac{2}{\sqrt{3}}, \\
&\frac{1}{2}-\frac{J_j}{2\sqrt{3}}, &\text{otherwise},
\end{aligned}
\right.
\end{equation}
\end{lemma}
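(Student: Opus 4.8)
The plan is to reduce the event $\{X_{k,i}=0\}$ to a one-sided tail event for the centered mini-batch gradient and then invoke the sharp concentration bound available for unimodal distributions whose mode coincides with their mean (Gauss's inequality), where the required unimodality and symmetry are exactly what Assumption \ref{Assumption-4} supplies.

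First I would fix the coordinate $j$ and assume without loss of generality that $\pmb g^{(t)}(j)>0$; the case $\pmb g^{(t)}(j)<0$ is symmetric, and $\pmb g^{(t)}(j)=0$ leaves the target sign undefined and is excluded. Under $\pmb g^{(t)}(j)>0$, the event $X_{k,i}=0$ is exactly $\{\pmb g_{k,i}^{(t)}(j)\le 0\}$. Setting $Z:=\pmb g_{k,i}^{(t)}(j)-\pmb g^{(t)}(j)$, recall from the discussion following Assumptions \ref{Assumption-3} and \ref{Assumption-4} that $Z$ has mean $0$, variance $\sigma_Z^2\le\sigma_j^2/A$ by (\ref{Mini-batch-SGD-variance-bound}), and a unimodal distribution symmetric about $0$, so its mode equals its mean. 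Hence $q_{k,i}=\Pr\big(Z\le -\pmb g^{(t)}(j)\big)$, and by symmetry $q_{k,i}=\tfrac12\,\Pr\big(|Z|\ge \pmb g^{(t)}(j)\big)$.

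Next I would bring in the GSNR: by Definition \ref{Definition-2}, $\pmb g^{(t)}(j)=J_j\sigma_j/\sqrt{A}\ge J_j\sigma_Z$, so the threshold lies at least $J_j$ standard deviations from the mean of $Z$; since $a\mapsto\Pr(|Z|\ge a)$ is nonincreasing, $q_{k,i}\le\tfrac12\,\Pr(|Z|\ge J_j\sigma_Z)$. I then apply Gauss's inequality for a symmetric unimodal random variable: $\Pr(|Z|\ge\lambda\sigma_Z)\le \tfrac{4}{9\lambda^2}$ when $\lambda\ge 2/\sqrt{3}$ and $\Pr(|Z|\ge\lambda\sigma_Z)\le 1-\lambda/\sqrt{3}$ otherwise. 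Taking $\lambda=J_j$ and halving yields $q_{k,i}\le\tfrac{2}{9J_j^2}$ for $J_j\ge 2/\sqrt{3}$ and $q_{k,i}\le\tfrac12-\tfrac{J_j}{2\sqrt{3}}$ otherwise, which is the claimed bound. One bookkeeping remark: since we only know $\sigma_Z\le\sigma_j/\sqrt{A}$, the true deviation multiple $\pmb g^{(t)}(j)/\sigma_Z$ may exceed $J_j$; but the right-hand side of Gauss's inequality is decreasing in the multiple on each branch and the two branches agree at $2/\sqrt{3}$, so replacing the multiple by $J_j$ only weakens the bound, and the stated inequality still holds.

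The main obstacle is choosing the right concentration tool rather than the reduction itself: a plain Chebyshev bound would only give $q_{k,i}\le\tfrac{1}{2J_j^2}$, which is weaker than $\tfrac{2}{9J_j^2}$, so the argument genuinely relies on the unimodal--symmetric refinement unlocked by Assumption \ref{Assumption-4}. I would therefore state Gauss's inequality precisely, verify that $Z$ meets its hypotheses (symmetry together with unimodality forces mode $=$ mean, so the second moment about the mode equals $\sigma_Z^2$), and handle the degenerate coordinate $\pmb g^{(t)}(j)=0$ separately.
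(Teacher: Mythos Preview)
Your argument is correct and is exactly the standard proof of this bound: reduce the sign-error event to a one-sided tail of the centered mini-batch gradient, use the symmetry from Assumption~\ref{Assumption-4} to rewrite it as one half of a two-sided deviation, and apply Gauss's inequality for unimodal distributions with mode equal to the mean. Your handling of the slack $\sigma_Z\le\sigma_j/\sqrt{A}$ via the monotonicity of the Gauss bound (and the continuity of the two branches at $2/\sqrt{3}$) is also right.

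The paper itself does not give an independent proof of Lemma~\ref{Lemma-1}: it simply quotes the result from the SignSGD literature (the references labeled \cite{SignSGD1,SignSGD2} in Appendix~\ref{App-1}). The proof in those references is precisely the Gauss-inequality argument you wrote out, so there is no methodological difference to compare; you have just made explicit what the paper imports by citation.
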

\noindent
We note that \emph{Lemma \ref{Lemma-1}} ensures the fact that the element-wise error probability $q_{k,i}$ from dataset $\mathcal{D}_{i}$ is less than or equal to $\frac{1}{2}$ in all cases. Recall the local Majority Vote scheme which selects the majority output from each allocated dataset in $\mathcal{S}_k$, we define a binomial random variable $X_k=\sum_{i\in\mathcal{S}_k}X_{k,i}$ to measure the voting process with
\begin{equation} 
\mathbb{E}(X_k)=(1-q_{k,i})|\mathcal{S}_k|,\quad\text{Var}(X_k)=q_{k,i}(1-q_{k,i})|\mathcal{S}_k|\label{EVXk-proof}
\end{equation}
Hence, the element-wise error probability of edge worker $k$ estimating $\pmb g^{(t)}$ can be given by
\begin{equation} 
\Pr\left(\pmb m_k^{(t)}(j)\neq\text{sign}\left(\pmb g^{(t)}(j)\right)\,\big|\,\mathcal{S}_k\right)=\Pr\left(X_k\leq\frac{|\mathcal{S}_k|}{2}\,\big|\,\mathcal{S}_k\right). \nonumber
\end{equation}
Note that the probability in this section is conditioned on the given $\mathcal{S}_k$, and we omit $\mathcal{S}_k$ in the sequel for notational ease. According to Cantelli's inequality, i.e., $\Pr(\mathbb{E}(X)-X\geq\lambda)\leq\frac{\text{Var}(X)}{\text{Var}(X)+\lambda^2},\,\lambda>0$, we can obtain
\begin{equation} 
\begin{split}
\Pr\left(X_k\leq\frac{|\mathcal{S}_k|}{2}\right)&=\Pr\left(\mathbb{E}(X_k)-X_k\geq\mathbb{E}(X_k)-\frac{|\mathcal{S}_k|}{2}\right) \\
&\leq\frac{\text{Var}(X_k)}{\text{Var}(X_k)+\left(\mathbb{E}(X_k)-\frac{|\mathcal{S}_k|}{2}\right)^2} \\ 
&\overset{(a)}{\leq}\frac{1}{2}\times\sqrt{\frac{\text{Var}(X_k)}{\left(\mathbb{E}(X_k)-|\mathcal{S}_k|/2\right)^2}}, \label{Xk-bound-proof}
\end{split}
\end{equation}
where (a) comes from the fact $1+a^2\geq2a$. By substituting (\ref{EVXk-proof}) into (\ref{Xk-bound-proof}), we can obtain
\begin{equation} 
\Pr\left(X_k\leq\frac{|\mathcal{S}_k|}{2}\right)\leq\frac{1}{2\sqrt{|\mathcal{S}_k|}}\times\sqrt{\frac{1}{4\left(\frac{1}{2}-q_{k,i}\right)^2}-1}. \label{Xk-bound1-proof}
\end{equation}
According to \emph{Lemma \ref{Lemma-1}}, (\ref{Lemma-1-relationship}) indicates the following upper bound
\begin{equation}
\frac{1}{4(\frac{1}{2}-q_{k,i})^2}-1\leq\frac{4}{J_j^2}, \label{Lemma-1-inference}
\end{equation}
whose proof can be easily found in the current references \cite{SignSGD1,Election-coding}. Finally, the expected result can be obtained by substituting (\ref{Lemma-1-inference}) into (\ref{Xk-bound1-proof}). \qed

\section{Proof of Theorem \ref{Theorem-1}} \label{App-2}
Note that $|\mathcal{S}_k|=n_k+1$ and $n_k$ is a random variable satisfying $n_k\sim\text{Binomial}(n,p)$. We can obtain the estimation error probability by averaging $\pmb q_{\mathcal{S}_k}^{(t)}(j)$ in \emph{Proposition \ref{Proposition-1}} over all the realization of $n_k$. Therefore, for a given coordinate $j$, the averaged bit error probability $\pmb q^{(t)}(j)$ can be upper-bounded by
\begin{equation} 
\begin{split}
\pmb q^{(t)}(j)&=\Pr\left(\pmb g_k^{(t)}(j)\neq\pmb g^{(t)}(j)\right)=\sum_{n_k=0}^{K-1}\Pr(n_k)\,\pmb q_{\mathcal{S}_k}^{(t)}(j) \\
&\leq\sum_{n_k=0}^{K-1}\frac{1}{J_j}\frac{1}{\sqrt{n_k+1}}\binom{K-1}{n_k}p^{n_k}(1-p)^{K-1-n_k} \\
&=\frac{1}{J_j}\mathbb{E}_{n_k}\left(\sqrt{\frac{1}{n_k+1}}\right).
\end{split}
\end{equation}
According to Jenson's inequality, i.e., $\mathbb{E}\left(f(x)\right)\leq f\left(\mathbb{E}(x)\right)$, and the concavity of the function $f(x)=\sqrt{x}$, we can obtain
\begin{equation} 
\frac{1}{J_j}\mathbb{E}_{n_k}\left(\sqrt{\frac{1}{n_k+1}}\right)\leq\frac{1}{J_j}\sqrt{\mathbb{E}_{n_k}\left(\frac{1}{n_k+1}\right)}.
\end{equation}
Based on the identity $\frac{1}{n_k+1}\binom{K-1}{n_k}=\frac{1}{K}\,\binom{K}{n_k+1}$, we can obtain
\begin{equation} 
\mathbb{E}_{n_k}\left(\frac{1}{n_k+1}\right)=\frac{1}{Kp}\left[1-(1-p)^K\right]\leq\frac{1}{Kp}. \nonumber
\end{equation}
Therefore, we arrive at the expected result, i.e.,
\begin{equation} 
\Pr\big[\pmb g_k^{(t)}(j)\neq\pmb g^{(t)}(j)\big]\leq\frac{1}{\,J_j\sqrt{Kp}\,}.
\end{equation}
Besides, to ensure that the honest edge workers can estimate the correct sign of the global model update with high probability, we restrict the error probability bound to less than $1/2$, i.e., $\frac{1}{J_j\sqrt{Kp}}<\frac{1}{2}$, yielding $p\in(\frac{4}{J_j^2 K},1]$, and we complete the proof. \qed

\section{Proof of Theorem \ref{Theorem-2}} \label{App-3}
The proving process is based on the procedures in Appendix \ref{App-1}. The key idea of this proof is to describe the event $\hat{\pmb v}^{(t)}=\text{sign}\left(\pmb g^{(t)}(j)\right)$ in one learning round. To this end, we define the following random variables, which are given by
\begin{equation} 
X_k=\left\{
\begin{aligned}
&\mathds{1}_{\pmb m_k^{(t)}(j)=\text{sign}\left(\pmb g^{(t)}(j)\right)},&k\in\mathcal{H}, \\
&0,&k\in\mathcal{B},
\end{aligned}
\right.
\end{equation}
to represent the relationship between the output from edge worker $k$ and the true gradient sign. Note that the honest edge workers transmit the correct message with probability $\pmb q^{(t)}(j)$ defined in \emph{Theorem \ref{Theorem-2}}, i.e., for any edge worker $k\in\mathcal{H}$, we have
\begin{equation} 
\begin{split}
\Pr\left(X_k=0\right)=\pmb q^{(t)}(j),\quad\Pr\left(X_k=1\right)=1-\pmb q^{(t)}(j),
\end{split}
\end{equation}
and the Byzantine edge workers transmit fake updates with probability $1$. We define $\widetilde{X}=\sum_{k\in\mathcal{K}}X_k+\tilde{n}$ to measure the over-the-air Majority Vote process, where $\tilde{n}$ is a Gaussian random variable satisfying
\begin{equation} 
\tilde{n}=\frac{1}{\rho^{(t)}} \text{Re}\left\{\pmb n^{(t)}(j)\right\}\sim\mathcal{N}\left(0,\frac{N_0}{2(\rho^{(t)})^2}\right). \nonumber
\end{equation}
We note that $\rho^{(t)}$ remains constant during one learning round, thereby this proving procedure can be extended to any coordinate $j\in[d]$. Recall that there exist $(1-c)K$ honest edge workers and $cK$ Byzantine edge workers, we derive the mean and variance of $\widetilde{X}$ as follows:
\begin{subequations} 
\begin{align}
&\mathbb{E}\left(\widetilde{X}\right)=(1-c)K\left(1-\pmb q^{(t)}(j)\right), \\
&\text{Var}\left(\widetilde{X}\right)=(1-c)K\pmb q^{(t)}(j)\left(1-\pmb q^{(t)}(j)\right)+\frac{N_0}{2(\rho^{(t)})^2}.
\end{align}
\end{subequations}
Note that we consider the worst-case scenario where the Byzantine edge workers and the receiver noise only corrupt the learning process, i.e., the receiver noise cannot reverse the sign of $\pmb m_k^{(t)}$ from Byzantine edge workers. Therefore, based on the Majority Vote scheme which outputs the majority opinion, we can obtain the necessary condition for a true output $\mathbb{E}(\widetilde{X})>\frac{K}{2}$, i.e.,
\begin{equation} 
(1-c)\left(1-\pmb q^{(t)}(j)\right)>\frac{1}{2},\;\;\forall j. \label{Necessary-condition-proof}
\end{equation}
For notational ease, we suppose $\zeta_j=(1-c)\left(1-\pmb q^{(t)}(j)\right)>1/2$.

\par
Further, according to (\ref{Aggregated-signal}) and (\ref{Central-majority-voting}), to ensure that
\begin{equation} 
\begin{split}
\text{sign}\big(\tilde{\pmb r}^{(t)}&(j)\big)=\text{sign}\left(\pmb g^{(t)}(j)\right) \\
&=\text{sign}\left(\sum_{k\in\mathcal{H}}\pmb m_k^{(t)}(j)+\sum_{k\in\mathcal{B}}\pmb m_k^{(t)}(j)+\frac{1}{\rho^{(t)}}\text{Re}\left\{\pmb n^{(t)}(j)\right\}\right), \nonumber
\end{split}
\end{equation}
$\widetilde{X}$ must be larger than $K/2$. Based on this, we can obtain the decoding bit error probability:
\begin{equation} 
P_{\text{err},j}^{(t)}=\Pr\left(\text{sign}\big(\tilde{\pmb r}^{(t)}(j)\big)\neq\text{sign}\left(\pmb g^{(t)}(j)\right)\right)=\Pr\left(\widetilde{X}\leq\frac{K}{2}\right) \nonumber
\end{equation}
According to Cantelli's inequality, i.e., $\Pr(\mathbb{E}(X)-X\geq\lambda)\leq\frac{\text{Var}(X)}{\text{Var}(X)+\lambda^2},\,\lambda>0$, we can obtain:
\begin{align*}
P_{\text{err},j}^{(t)}&\leq\frac{1}{2}\times\sqrt{\frac{\text{Var}(\widetilde{X})}{\left(\mathbb{E}(\widetilde{X})-K/2\right)^2}} \\
&=\frac{1}{2}\sqrt{\frac{(1-c)\pmb q^{(t)}(j)\left(1-\pmb q^{(t)}(j)\right)}{K\zeta_j^2}+\frac{N_0}{2\zeta_j^2\left(\rho^{(t)}\right)^2K^2}} \\
&\overset{(a)}{\leq}\frac{1}{2\zeta_j}\left(\sqrt{\frac{1-c}{K}\pmb q^{(t)}(j)\left(1-\pmb q^{(t)}(j)\right)}+\frac{1}{K\rho^{(t)}}\sqrt{\frac{N_0}{2}}\right), \nonumber
\end{align*}
where (a) comes from the fact $1+a^2\geq 2a$. We note that $\pmb q^{(t)}(j)\in(0,1)$, thus we can obtain $\pmb q^{(t)}(j)\left(1-\pmb q^{(t)}(j)\right)\leq 1/4$. Therefore, according to (\ref{Necessary-condition-proof}), we can obtain the decoding error probability for coordinate $j$, which is given by
\begin{equation} 
P_{\text{err},j}^{(t)}\leq\frac{1}{2}\sqrt{\frac{1-c}{K}}+\frac{1}{K\rho^{(t)}}\sqrt{\frac{N_0}{2}}.\label{Coordinate-bit-error-bound-proof}
\end{equation}
We note that this upper bound is not correlated to coordinate $j$ because the influence of $j$ is completely attributed to (\ref{Necessary-condition-proof}). Therefore, the decoding error probability $P_{\text{err}}^{(t)}$ is equal to $P_{\text{err},j}^{(t)}$ in (\ref{Coordinate-bit-error-bound-proof}) and we complete the proof. \qed

\section{Proof of Theorem \ref{Theorem-3}} \label{App-4}
According to Assumptions \ref{Assumption-1} and \ref{Assumption-2}, we first focus on the single-step loss bound:
\begin{equation}
\begin{split}
F(&\pmb\omega^{(t+1)})-F(\pmb\omega^{(t)}) \\
&\leq(\pmb g^{(t)})^T(\pmb\omega^{(t+1)}-\pmb\omega^{(t)})+\frac{1}{2}\sum_{j=1}^{d}L_j\left(\pmb\omega^{(t+1)}(j)-\pmb\omega^{(t)}(j)\right)^2 \\
&=-\eta\,(\pmb g^{(t)})^T\text{sign}\left(\tilde{\pmb r}^{(t)}\right)+\frac{1}{2}\sum_{j=1}^{d}L_j\left[-\eta\,\text{sign}\left(\tilde{\pmb r}^{(t)}(j)\right)\right]^2 \\
&=-\eta\,(\pmb g^{(t)})^T\text{sign}\left(\tilde{\pmb r}^{(t)}\right)+\frac{\eta^2}{2}\sum_{j=1}^{d}L_j \\
&=-\eta\,\big|\big|\pmb g^{(t)}\big|\big|_1+\frac{\eta^2}{2}\sum_{j=1}^{d}L_j \\
&\quad+2\eta\sum_{j=1}^{d}\big|\pmb g^{(t)}(j)\big|\,\mathds{1}_{\text{sign}\big(\tilde{\pmb r}^{(t)}(j)\big)\neq\text{sign}\left(\pmb g^{(t)}(j)\right)}. \nonumber
\end{split}
\end{equation}
Recall the random event $\text{sign}\big(\tilde{\pmb r}^{(t)}(j)\big)\neq\text{sign}\left(\pmb g^{(t)}(j)\right)$ analyzed in Appendix \ref{App-3}, by taking the expectation on both sides of the inequality conditioned on the previous model parameter $\pmb\omega^{(t)}$, we can obtain:
\begin{equation} 
\begin{split}
\mathbb{E}\Big(F(\pmb\omega^{(t+1)})&-F(\pmb\omega^{(t)})\,\big|\,\pmb\omega^{(t)}\Big)\leq-\eta\,\big|\big|\pmb g^{(t)}\big|\big|_1+\frac{\eta^2}{2}\sum_{j=1}^{d}L_j \\
&+2\eta\sum_{j=1}^{d}\big|\pmb g^{(t)}(j)\big|\,\Pr\left(\text{sign}\big(\tilde{\pmb r}^{(t)}(j)\big)\neq\text{sign}\big(\pmb g^{(t)}(j)\big)\right). \nonumber
\end{split}
\end{equation}

\par
Under the condition of \emph{Theorem \ref{Theorem-2}}, by substituting (\ref{Theorem-2-relationship}) into the above inequality, we have
\begin{align*} 
\mathbb{E}\Big(F(&\pmb\omega^{(t+1)})-F(\pmb\omega^{(t)})\,\big|\,\pmb\omega^{(t)}\Big)\leq-\eta\,\big|\big|\pmb g^{(t)}\big|\big|_1+\frac{\eta^2}{2}\sum_{j=1}^{d}L_j \\
&\quad+2\eta\sum_{j=1}^{d}\big|\pmb g^{(t)}(j)\big|\,\left(\frac{1}{2}\sqrt{\frac{1-c}{K}}+\frac{1}{K\rho^{(t)}}\sqrt{\frac{N_0}{2}}\right) \\
&=\eta\left(\sqrt{\frac{1-c}{K}}+\frac{\sqrt{2}}{K\rho^{(t)}}\sqrt{N_0}-1\right)\big|\big|\pmb g^{(t)}\big|\big|_1+\frac{\eta^2}{2}||\pmb L||_1 \\
&\overset{(a)}{=}\frac{1}{\,\sqrt{T||\pmb L||_1}}\left(\sqrt{\frac{1-c}{K}}+\frac{\sqrt{2}}{K\rho^{(t)}}\sqrt{N_0}-1\right)\big|\big|\pmb g^{(t)}\big|\big|_1+\frac{1}{2T}, \nonumber
\end{align*}
where (a) comes from the setting $\eta=1\big/\sqrt{T||\pmb L||_1}$. We consider the worst-case scenario and define $\rho_{\text{min}}=\min_{t}\rho^{(t)}$, which refers to the lowest transmit power during the learning process.

\par
According to Assumption \ref{Assumption-2}, the global loss function defined in (\ref{Global-loss-function}) has a lower bound $F^*$. To average out the randomness, we take the expectation over $\pmb\omega^{(t)}$, which leads to
\begin{equation} 
\begin{split}
F&(\pmb\omega^{(0)})-F^*\geq F(\pmb\omega^{(0)})-\mathbb{E}\left(F(\pmb\omega^{(T)})\right) \\
&\geq\sum_{t=0}^{T-1}\mathbb{E}\left[F(\pmb\omega^{(t)})-F(\pmb\omega^{(t+1)})\right] \\
&\geq\sum_{t=0}^{T-1}\mathbb{E}\left[\frac{1}{\,\sqrt{T||\pmb L||_1}}\left(1-\sqrt{\frac{1-c}{K}}-\frac{\sqrt{2}}{K\rho_{\text{min}}}\sqrt{N_0}\right)\big|\big|\pmb g^{(t)}\big|\big|_1\!-\!\frac{1}{2T}\right] \\
&=\frac{1}{\,\sqrt{T||\pmb L||_1}}\left(1-\sqrt{\frac{1-c}{K}}-\frac{\sqrt{2}}{K\rho_{\text{min}}}\sqrt{N_0}\right)\mathbb{E}\left[\sum_{t=0}^{T-1}\big|\big|\pmb g^{(t)}\big|\big|_1\right]\!-\!\frac{1}{2T} \\
&\triangleq\frac{1}{\,\sqrt{T||\pmb L||_1}}\,\Delta\,\mathbb{E}\left[\sum_{t=0}^{T-1}\big|\big|\pmb g^{(t)}\big|\big|_1\right]-\frac{1}{2T} \nonumber
\end{split}
\end{equation}

\par
Finally, we rearrange the above formula and obtain
\begin{equation} 
\mathbb{E}\left[\sum_{t=0}^{T-1}\big|\big|\pmb g^{(t)}\big|\big|_1\right]\leq\frac{\sqrt{T||\pmb L||_1}}{\Delta}\left(F(\pmb\omega^{(0)})-F^*+\frac{1}{2T}\right).
\end{equation}
The expected result can be obtained by dividing both sides of the inequality by $T$. \qed

\bibliographystyle{ieeetr}
\bibliography{refs}

\begin{thebibliography}{10}

\bibitem{Distributed-learning-intro1}
K.~B. Letaief, W.~Chen, Y.~Shi, J.~Zhang, and Y.-J.~A. Zhang, ``The roadmap to
  6g: Ai empowered wireless networks,'' {\em IEEE Commun. Mag.}, vol.~57,
  no.~8, pp.~84--90, 2019.

\bibitem{Distributed-learning-intro2}
M.~Chen, D.~Gündüz, K.~Huang, W.~Saad, M.~Bennis, A.~V. Feljan, and H.~V.
  Poor, ``Distributed learning in wireless networks: Recent progress and future
  challenges,'' {\em IEEE J. Sel. Areas Commun.}, vol.~39, no.~12,
  pp.~3579--3605, 2021.

\bibitem{FLChallenge-future}
T.~Li, A.~K. Sahu, A.~Talwalkar, and V.~Smith, ``Federated learning:
  Challenges, methods, and future directions,'' {\em IEEE Signal Process.
  Mag.}, vol.~37, no.~3, pp.~50--60, 2020.

\bibitem{FLGoogle}
B.~McMahan, E.~Moore, D.~Ramage, S.~Hampson, and B.~A. y~Arcas,
  ``Communication-efficient learning of deep networks from decentralized
  data,'' in {\em Proc. Int. Conf. Artif. Intell. Stat. (AISTATS)},
  pp.~1273--1282, PMLR, 2017.

\bibitem{Swarm-learning}
S.~Warnat-Herresthal, H.~Schultze, K.~L. Shastry, S.~Manamohan, S.~Mukherjee,
  V.~Garg, R.~Sarveswara, K.~H{\"a}ndler, P.~Pickkers, N.~A. Aziz, {\em
  et~al.}, ``Swarm learning for decentralized and confidential clinical machine
  learning,'' {\em Nature}, vol.~594, no.~7862, pp.~265--270, 2021.

\bibitem{Split-learning-intro}
O.~Gupta and R.~Raskar, ``Distributed learning of deep neural network over
  multiple agents,'' {\em J. Netw. Comput. Appl.}, vol.~116, pp.~1--8, 2018.

\bibitem{Distributed-learning-intro3}
K.~B. Letaief, Y.~Shi, J.~Lu, and J.~Lu, ``Edge artificial intelligence for 6g:
  Vision, enabling technologies, and applications,'' {\em IEEE J. Sel. Areas
  Commun.}, vol.~40, no.~1, pp.~5--36, 2022.

\bibitem{Distributed-learning-intro5}
Y.~Shi, K.~Yang, T.~Jiang, J.~Zhang, and K.~B. Letaief,
  ``Communication-efficient edge ai: Algorithms and systems,'' {\em IEEE
  Commun. Surveys Tuts.}, vol.~22, no.~4, pp.~2167--2191, 2020.

\bibitem{Iot-intro}
D.~C. Nguyen, M.~Ding, P.~N. Pathirana, A.~Seneviratne, J.~Li, and
  H.~Vincent~Poor, ``Federated learning for internet of things: A comprehensive
  survey,'' {\em IEEE Commun. Surveys Tuts.}, vol.~23, no.~3, pp.~1622--1658,
  2021.

\bibitem{Distributed-learning-intro4}
S.~Niknam, H.~S. Dhillon, and J.~H. Reed, ``Federated learning for wireless
  communications: Motivation, opportunities, and challenges,'' {\em IEEE
  Commun. Mag.}, vol.~58, no.~6, pp.~46--51, 2020.

\bibitem{Healthcare-intro}
N.~Rieke, J.~Hancox, W.~Li, F.~Milletari, H.~R. Roth, S.~Albarqouni, S.~Bakas,
  M.~N. Galtier, B.~A. Landman, K.~Maier-Hein, {\em et~al.}, ``The future of
  digital health with federated learning,'' {\em NPJ digital medicine}, vol.~3,
  no.~1, pp.~1--7, 2020.

\bibitem{Intro-data-center1}
A.~Celik, B.~Shihada, and M.-S. Alouini, ``Wireless data center networks:
  Advances, challenges, and opportunities,'' {\em arXiv:1811.11717
  [Online].Available: https://arxiv.org/pdf/1811.11717.pdf}, 2018.

\bibitem{Intro-data-center2}
D.~Liu and O.~Simeone, ``Channel-driven monte carlo sampling for bayesian
  distributed learning in wireless data centers,'' {\em IEEE J. Sel. Areas
  Commun.}, vol.~40, no.~2, pp.~562--577, 2022.

\bibitem{Intro-data-center3}
J.~Song and M.~Kountouris, ``Wireless distributed edge learning: How many edge
  devices do we need?,'' {\em IEEE J. Sel. Areas Commun.}, vol.~39, no.~7,
  pp.~2120--2134, 2021.

\bibitem{Trustworthy-AI-intro}
H.~Liu, Y.~Wang, W.~Fan, X.~Liu, Y.~Li, S.~Jain, Y.~Liu, A.~K. Jain, and
  J.~Tang, ``Trustworthy ai: A computational perspective,'' {\em
  arXiv:2107.06641 [Online].Available: https://arxiv.org/pdf/2107.06641.pdf},
  2021.

\bibitem{Distributed-learning-intro6}
X.~Lu, L.~Xiao, P.~Li, X.~Ji, C.~Xu, S.~Yu, and W.~Zhuang, ``Reinforcement
  learning based physical cross-layer security and privacy in 6g,'' {\em IEEE
  Commun. Surveys Tuts.}, vol.~25, no.~1, pp.~425--466, 2023.

\bibitem{FLByzantine-SGD2}
P.~Blanchard, E.~M. El~Mhamdi, R.~Guerraoui, and J.~Stainer, ``Machine learning
  with adversaries: Byzantine tolerant gradient descent,'' in {\em Proc. 31st
  Int. Conf. Neural Inf. Process. Syst. (Neurips)}, pp.~118--128, 2017.

\bibitem{Coordinate-wise-trimmed-mean}
D.~Yin, Y.~Chen, R.~Kannan, and P.~Bartlett, ``Byzantine-robust distributed
  learning: Towards optimal statistical rates,'' in {\em Proc. 35th Int. Conf.
  Mach. Learn. (ICML)}, pp.~5650--5659, PMLR, 2018.

\bibitem{Krum}
P.~Blanchard, E.~M. El~Mhamdi, R.~Guerraoui, and J.~Stainer, ``Machine learning
  with adversaries: Byzantine tolerant gradient descent,'' in {\em Proc. 31st
  Int. Conf. Neural Inf. Process. Syst. (Neurips)}, pp.~118--128, 2017.

\bibitem{Geometric-median}
S.~Minsker {\em et~al.}, ``Geometric median and robust estimation in banach
  spaces,'' {\em Bernoulli}, vol.~21, no.~4, pp.~2308--2335, 2015.

\bibitem{FLByzantine-SGD3}
Z.~Wu, Q.~Ling, T.~Chen, and G.~B. Giannakis, ``Federated variance-reduced
  stochastic gradient descent with robustness to byzantine attacks,'' {\em IEEE
  Trans. Signal Process.}, vol.~68, pp.~4583--4596, 2020.

\bibitem{FLRobust-GM1}
K.~Pillutla, S.~M. Kakade, and Z.~Harchaoui, ``Robust aggregation for federated
  learning,'' {\em IEEE Trans. Signal Process.}, vol.~70, pp.~1142--1154, 2022.

\bibitem{FLRobust-GM2}
Y.~Chen, L.~Su, and J.~Xu, ``Distributed statistical machine learning in
  adversarial settings: Byzantine gradient descent,'' {\em Proc. ACM.
  Measurement. Anal. Comput. Syst.}, vol.~1, no.~2, pp.~1--25, 2017.

\bibitem{RSA}
L.~Li, W.~Xu, T.~Chen, G.~B. Giannakis, and Q.~Ling, ``Rsa: Byzantine-robust
  stochastic aggregation methods for distributed learning from heterogeneous
  datasets,'' in {\em Proc. AAAI Conf. Artif. Intelligence}, vol.~33,
  pp.~1544--1551, 2019.

\bibitem{Iterative-filtering}
D.~Yin, Y.~Chen, R.~Kannan, and P.~Bartlett, ``Defending against saddle point
  attack in byzantine-robust distributed learning,'' in {\em Proc. 36th Int.
  Conf. Mach. Learn. (ICML)}, pp.~7074--7084, PMLR, 2019.

\bibitem{SignSGD2}
J.~Bernstein, J.~Zhao, K.~Azizzadenesheli, and A.~Anandkumar, ``Signsgd with
  majority vote is communication efficient and fault tolerant,'' in {\em Proc.
  7th Int. Conf. Learn. Representations (ICLR)}, 2019.

\bibitem{Election-coding}
J.-y. Sohn, D.-J. Han, B.~Choi, and J.~Moon, ``Election coding for distributed
  learning: Protecting signsgd against byzantine attacks,'' {\em Proc. 34th
  Int. Conf. Neural Inf. Process. Syst. (Neurips)}, vol.~33, 2020.

\bibitem{DETOX}
S.~Rajput, H.~Wang, Z.~Charles, and D.~Papailiopoulos, ``{DETOX}: A
  redundancy-based framework for faster and more robust gradient aggregation,''
  in {\em Proc. 33rd Int. Conf. Neural Inf. Process. Syst. (Neurips)}, vol.~32,
  2019.

\bibitem{DRACO}
L.~Chen, H.~Wang, Z.~Charles, and D.~Papailiopoulos, ``{DRACO}:
  {B}yzantine-resilient distributed training via redundant gradients,'' in {\em
  Proc. 35th Int. Conf. Mach. Learn. (ICML)}, vol.~80, pp.~903--912, PMLR,
  10--15 Jul 2018.

\bibitem{Mimic}
S.~P. Karimireddy, L.~He, and M.~Jaggi, ``Byzantine-robust learning on
  heterogeneous datasets via bucketing,'' in {\em Proc. 10th Int. Conf. Learn.
  Representations (ICLR)}, 2022.

\bibitem{FLAirComp1}
K.~Yang, T.~Jiang, Y.~Shi, and Z.~Ding, ``Federated learning via over-the-air
  computation,'' {\em IEEE Trans. Wirel. Commun.}, vol.~19, no.~3,
  pp.~2022--2035, 2020.

\bibitem{FLAirComp2}
G.~Zhu, Y.~Wang, and K.~Huang, ``Broadband analog aggregation for low-latency
  federated edge learning,'' {\em IEEE Trans. Wirel. Commun.}, vol.~19, no.~1,
  pp.~491--506, 2019.

\bibitem{FLPrivacy-free}
D.~Liu and O.~Simeone, ``Privacy for free: Wireless federated learning via
  uncoded transmission with adaptive power control,'' {\em IEEE J. Sel. Areas
  Commun.}, vol.~39, no.~1, pp.~170--185, 2020.

\bibitem{Channel-inverse1}
M.~M. Amiri and D.~G{\"u}nd{\"u}z, ``Machine learning at the wireless edge:
  Distributed stochastic gradient descent over-the-air,'' {\em IEEE Trans.
  Signal Process.}, vol.~68, pp.~2155--2169, 2020.

\bibitem{One-bit-AirComp1}
G.~Zhu, Y.~Du, D.~Gündüz, and K.~Huang, ``One-bit over-the-air aggregation
  for communication-efficient federated edge learning: Design and convergence
  analysis,'' vol.~20, pp.~2120--2135, 2021.

\bibitem{Byzantine-AirComp1}
S.~Huang, Y.~Zhou, T.~Wang, and Y.~Shi, ``Byzantine-resilient federated machine
  learning via over-the-air computation,'' in {\em 2021 IEEE Int. Conf. Commun.
  Workshops (ICC Workshops)}, pp.~1--6, 2021.

\bibitem{Byzantine-AirComp2}
H.~Sifaou and G.~Y. Li, ``Over-the-air federated learning under byzantine
  attacks,'' {\em arXiv:2205.02949 [Online].Available:
  https://arxiv.org/pdf/2205.02949.pdf}, 2022.

\bibitem{Byzantine-AirComp3}
X.~Fan, Y.~Wang, Y.~Huo, and Z.~Tian, ``Bev-sgd: Best effort voting sgd against
  byzantine attacks for analog-aggregation-based federated learning over the
  air,'' {\em IEEE Internet Things J.}, vol.~9, no.~19, pp.~18946--18959, 2022.

\bibitem{SignSGD1}
J.~Bernstein, Y.-X. Wang, K.~Azizzadenesheli, and A.~Anandkumar, ``signsgd:
  Compressed optimisation for non-convex problems,'' in {\em Proc. 35th Int.
  Conf. Mach. Learn. (ICML)}, pp.~560--569, PMLR, 2018.

\bibitem{FLAirComp4}
Z.~Wang, Y.~Zhou, Y.~Shi, and W.~Zhuang, ``Interference management for
  over-the-air federated learning in multi-cell wireless networks,'' {\em IEEE
  J. Sel. Areas Commun.}, vol.~40, no.~8, pp.~2361--2377, 2022.

\bibitem{Block-crossing}
Y.~Yang, Y.~Zhou, Y.~Wu, and Y.~Shi, ``Differentially private federated
  learning via reconfigurable intelligent surface,'' {\em IEEE Internet Things
  J.}, vol.~9, no.~20, pp.~19728--19743, 2022.

\bibitem{RIS}
Z.~Wang, J.~Qiu, Y.~Zhou, Y.~Shi, L.~Fu, W.~Chen, and K.~B. Letaief,
  ``Federated learning via intelligent reflecting surface,'' {\em IEEE Trans.
  Wireless Commun.}, vol.~21, no.~2, pp.~808--822, 2022.

\bibitem{Blockchain}
Z.~Yang, Y.~Shi, Y.~Zhou, Z.~Wang, and K.~Yang, ``Trustworthy federated
  learning via blockchain,'' {\em IEEE Internet Things J.}, vol.~10, no.~1,
  pp.~92--109, 2023.

\bibitem{Timing-advance}
E.~Dahlman, S.~Parkvall, and J.~Skold, {\em 4G: LTE/LTE-advanced for mobile
  broadband}.
\newblock Academic press, 2013.

\bibitem{Syn-datacenter}
T.~Yang, R.~Gifford, A.~Haeberlen, and L.~T.~X. Phan, ``The synchronous data
  center,'' in {\em Proc. Workshop Hot Top. Oper. Syst.}, pp.~142--148, 2019.

\end{thebibliography}

\begin{IEEEbiography}[{\includegraphics[width=1in,height=1.25in,clip,keepaspectratio]{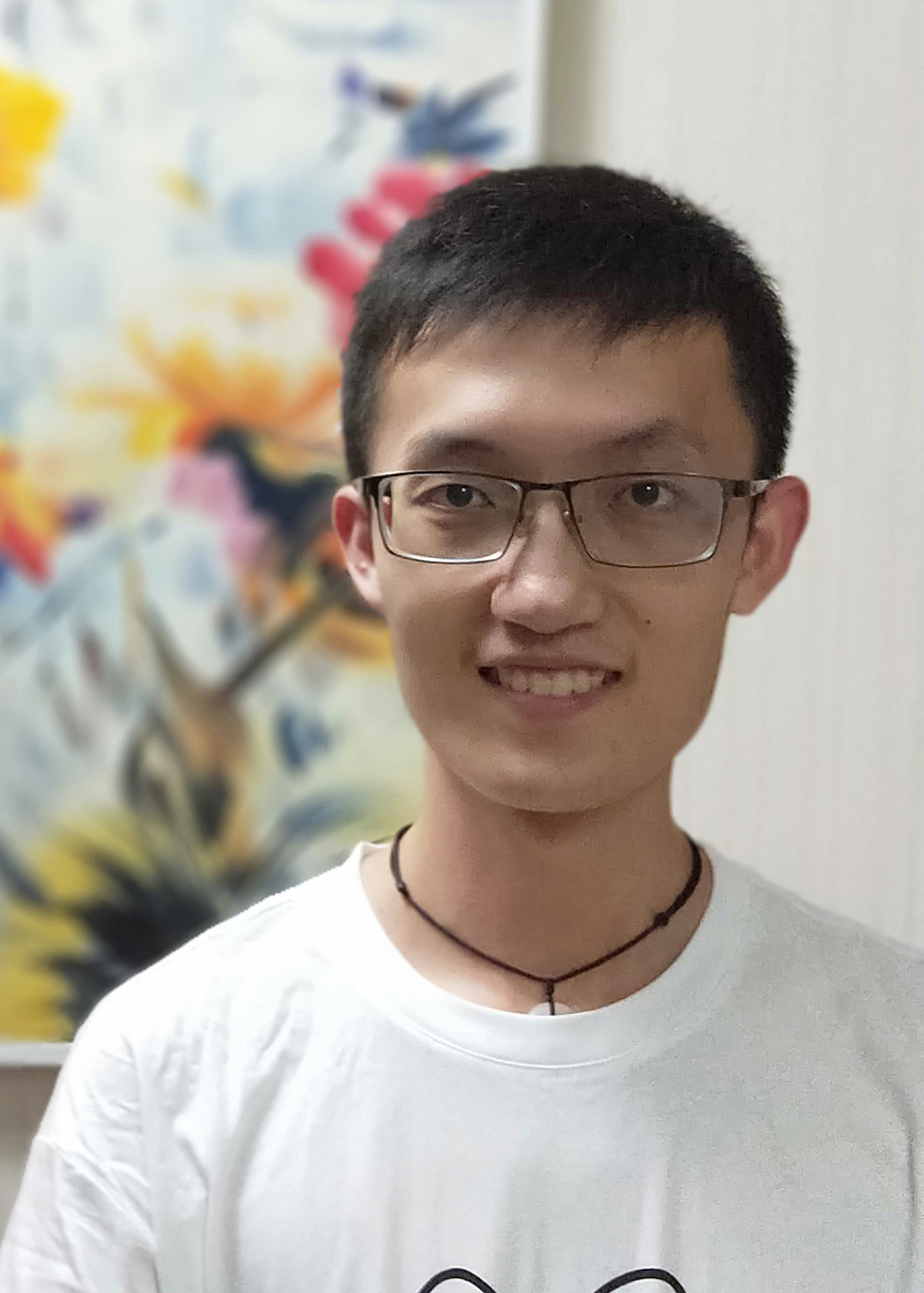}}]{Yuhan Yang} (Student Member) received the B.S. degree in communication engineering from Xidian University, Xi’an, China, in 2021. He is currently pursuing the master’s degree with the School of Information Science and Technology, ShanghaiTech University. His research interests include wireless communications, information theory and trustworthy distributed learning.
\end{IEEEbiography}
\vskip 0pt plus -1fil
\begin{IEEEbiography}[{\includegraphics[width=1in,height=1.25in,clip,keepaspectratio]{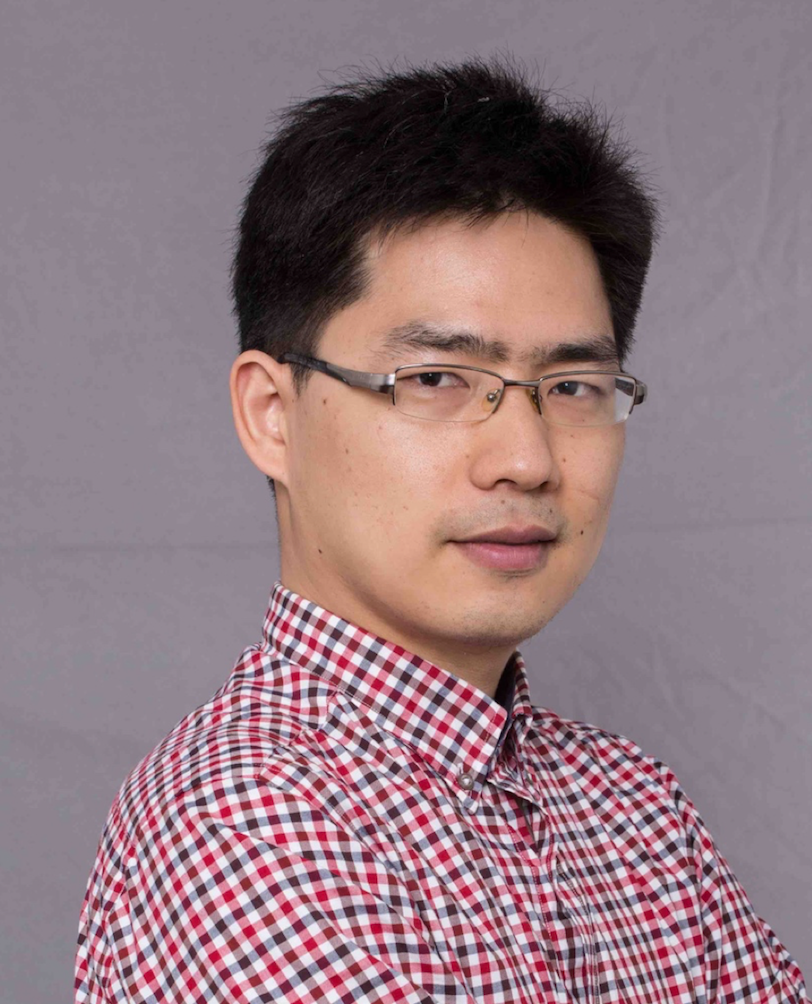}}]{Youlong Wu} (S’13–M’15) obtained his B.S. degree in electrical engineering from Wuhan University, Wuhan, China, in 2007. He received the M.S. degree in electrical engineering from Shanghai Jiaotong University, Shanghai, China, in 2011. In 2014, he received the Ph.D. degree at Telecom ParisTech, in Paris, France. In December 2014, he worked as a postdoc at the Institute for Communication Engineering, Technical University Munich (TUM), Munich, Germany. In 2017, he joined the School of Information Science and Technology at ShanghaiTech University. He obtained the TUM Fellowship in 2014 and is an Alexander von Humboldt research fellow. His research interests in Communication Theoy, Information Theory and its applications e.g., coded caching, distributed computation, and machine learning.
\end{IEEEbiography}
\vskip 0pt plus -1fil  
\begin{IEEEbiography}
[{\includegraphics[width=1in,height=1.25in,clip,keepaspectratio]{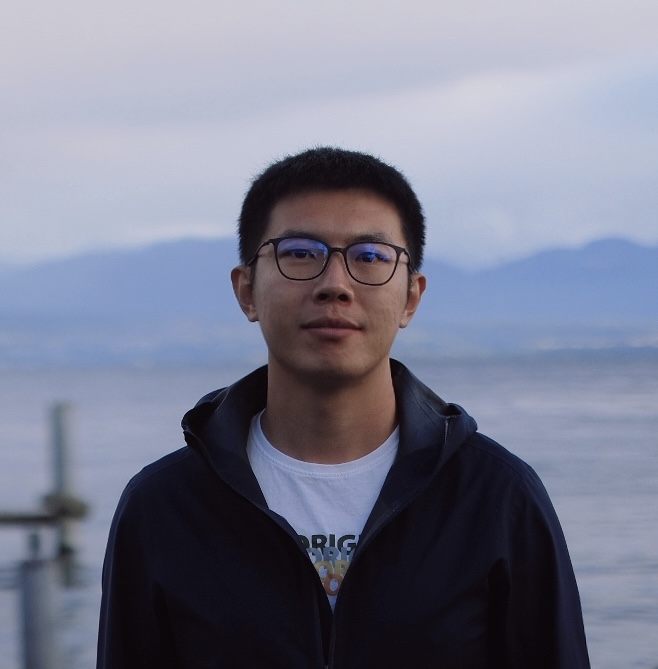}}]{Yuning Jiang} (Member, IEEE) received the B.Sc. degree in electronic engineering from Shandong University, Jinan, China, in 2014, and the Ph.D. degree in information engineering from ShanghaiTech University, Shanghai, China, and the University of Chinese Academy of Sciences, Beijing, China, in 2020. He was a Visiting Scholar with the University of California at Berkeley (UC Berkeley), Berkeley, CA, USA, the University of Freiburg, Freiburg im Breisgau, Germany, and Technische Universität Ilmenau (TU Ilmenau), Ilmenau, Germany, during his Ph.D. study. He is currently a Postdoctoral Researcher with the Automatic Control Laboratory, École Polytechnique Fédérale de Lausanne (EPFL), Lausanne, Switzerland. His research focuses on learning- and optimization-based policy for operating complex systems, such as nonlinear autonomous systems (e.g., autonomous vehicles, robotics, and smart buildings), and large-scale multiagent systems (e.g., power and energy systems, IoT, and traffic networks).
\end{IEEEbiography}
\vskip 0pt plus -1fil  
\begin{IEEEbiography}[{\includegraphics[width=1in,height=1.25in,clip,keepaspectratio]{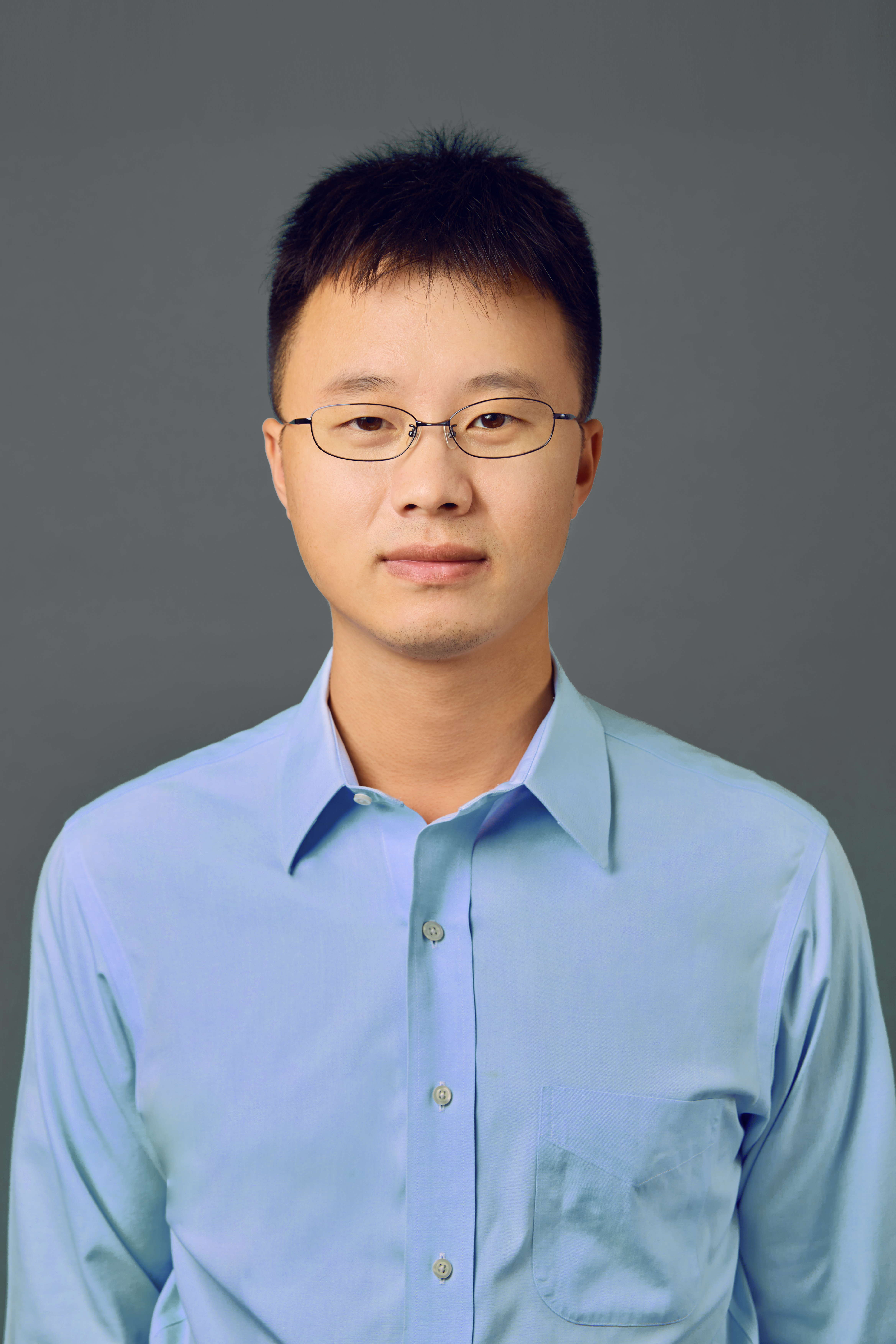}}]{Yuanming Shi} (S’13-M’15-SM’20) received the B.S. degree in electronic engineering from Tsinghua University, Beijing, China, in 2011. He received the Ph.D. degree in electronic and computer engineering from The Hong Kong University of Science and Technology (HKUST), in 2015. Since September 2015, he has been with the School of Information Science and Technology in ShanghaiTech University, where he is currently a tenured Associate Professor. He visited University of California, Berkeley, CA, USA, from October 2016 to February 2017. His research areas include edge AI, wireless communications, and satellite networks. He was a recipient of the IEEE Marconi Prize Paper Award in Wireless Communications in 2016, the Young Author Best Paper Award by the IEEE Signal Processing Society in 2016, the IEEE ComSoc Asia-Pacific Outstanding Young Researcher Award in 2021, and the Chinese Institute of Electronics First Prize in Natural Science in 2022. He is also an editor of IEEE Transactions on Wireless Communications, IEEE Journal on Selected Areas in Communications, and Journal of Communications and Information Networks. He is an IET Fellow.
\end{IEEEbiography}
\end{document}